\documentclass{cccg25}
\usepackage[T1]{fontenc}
\usepackage{amsmath}
\usepackage{amsfonts}
\usepackage{amssymb}
\usepackage{graphicx} 
\usepackage{adjustbox}
\usepackage{tikz}
\usetikzlibrary{calc}
\usepackage{subcaption}
\usepackage{thm-restate}

\usepackage{algorithm}
\usepackage{algorithmicx}
\usepackage{algpseudocode}

\algtext*{EndWhile}
\algtext*{EndFor}
\algtext*{EndIf}
\algtext*{EndFunction}

\newcommand{\largec}{l}
\newcommand{\mediumc}{b}
\newcommand{\smallc}{a}
\newcommand{\cost}[1]{C(#1)}
\title{Approximation and Hardness of Polychromatic TSP}
\author{
Thomas Schibler\thanks{University of California, Santa Barbara, CA, USA, \texttt{tschibler@ucsb.edu}}
\and
Subhash Suri\thanks{University of California Santa Barbara, CA, USA, \texttt{suri@ucsb.edu}}
\and
Jie Xue\thanks{New York University Shanghai, China, \texttt{jiexue@nyu.edu}}
}
\begin{document}
\thispagestyle{plain}
\maketitle

\begin{abstract}
We introduce the \emph{Polychromatic Traveling Salesman Problem} (PCTSP), where the input is an edge weighted graph whose vertices are partitioned into $k$ equal-sized color classes, and the goal is to find a minimum-length Hamiltonian cycle that visits the classes in a fixed cyclic order.  
This generalizes the Bipartite TSP (when $k = 2$) and the classical TSP (when $k = n$).  
We give a polynomial-time $(3 - 2\cdot10^{-36})$-approximation algorithm for metric PCTSP.  
Complementing this, we show that Euclidean PCTSP is APX-hard even in $\mathbb{R}^2$, ruling out the existence of a PTAS unless P = NP.
\end{abstract}

\section{Introduction}
The classic Traveling Salesman Problem (TSP) takes an edge-weighted graph $G$ as input and aims to find a minimum-weight Hamiltonian cycle in $G$ (which by definition is a simple cycle that visits all vertices of $G$).
As one of the most fundamental algorithmic problems, TSP has stood as a testing ground for algorithmic developments in the theory community since its introduction as one of Karp's original NP-Complete problems.
In 1976, Christofides showed a 1.5-approximation for any metric graph $G$, and this bound resisted progress for decades until Karlin et al. \cite{karlin2021} improved the bound (slightly) to $1.5 - 10^{-36}$, which remains the best known.
In the case that $G$ contains only edge weights 1 and 2, Papadimitriou and Yannakakis obtained a better approximation ratio of 7/6, and showed that even this restricted case is APX-Hard, and thus does not admit a PTAS unless P=NP \cite{papadimitriou1993}.
However, if the edge weights correspond to the Euclidean distances between points in some fixed $\mathbb{R}^d$, then a PTAS is known from the celebrated work of Arora~\cite{arora1998} and Mitchell.

In this paper, we consider a colored version of TSP, called \textit{polychromatic TSP} (PCTSP), in which the vertices of $G$ are partitioned into color classes of the same size and the goal is to find a TSP tour that repeatedly visits the color classes in a consistent order.
Formally, a \textit{polychromatic graph} consists of a graph $G$ together with a partition $\mathcal{P} = \{V_1, \cdots, V_k\}$ of $V(G)$ into subsets (called \textit{color classes}) of equal size (i.e., $|V_1| = \cdots = |V_k|$) and a weight function $w:E(G) \rightarrow \mathbb{R}_{\geq 0}$.
Let $(G,\mathcal{P},w)$ be a polychromatic graph with $|V(G)| = n$ and $|\mathcal{P}| = k$.
The \textit{weight} of a path/cycle $\pi$ in $G$, denoted by $w(\pi)$, is the sum of the weights of the edges in $\pi$ under the weight function $w$.
For a permutation $\sigma = (\sigma(0),\sigma(1),\dots,\sigma(k-1))$ of $[k] = \{1, \cdots, k\}$, we say a cycle $C$ in $G$ is a \textit{$\sigma$-cycle} if it can be written as $C = (v_0, v_1, \cdots, v_r)$ where $v_r = v_0$ such that $v_i \in V_{\sigma(i \text{ mod } k)}$ for all $i \in \{0\} \cup [r]$; the weight of $C$ is equal to $\sum_{i=1}^n w((v_{i-1},v_i))$.
Note that the length of a $\sigma$-cycle is always a multiple of $k$.
Intuitively, a $\sigma$-cycle repeatedly visits the color classes of $G$ in the order specified by $\sigma$.
The goal of PCTSP is to find a minimum-weight Hamiltonian cycle in $G$ that is a $\sigma$-cycle for some permutation $\sigma$ of $[k]$.
Clearly, when $k = n$, PCTSP is exactly the classic TSP.
As such, PCTSP is a generalization of TSP and is thus NP-hard even when $G$ is a metric graph.
The main focus of this paper is to study the approximability of PCTSP in metric graphs, and more specifically, Euclidean graphs in a fixed dimension.

We remark that PCTSP is a natural extension of Bipartite TSP, where a tour must alternate between red and blue nodes. This problem has long been studied in the robotics literature as an appropriate model for ``pick and place'' route planning \cite{anily1992, chalasani1996, srivastav2001}. (In this setting, a robot with unit capacity must move objects from a set of sources to a set of destinations, necessitating a route that alternates between the two node types.) 
For metric graphs, Anily and Hassin gave a $2.5$-approximation \cite{anily1992}, and Chalasani et al. \cite{chalasani1996} later improved the ratio to $2$.

Lastly, many other lines of research consider chromatic variants of the TSP.
For example, Dross et al. \cite{dross2023} give a Gap-ETH tight approximation scheme for the bicolored noncrossing Euclidean TSP, where the goal is to find a separate tour for the points of each color, such that the two do not intersect in $\mathbb{R}^2$. Baligács et al. give a constant factor approximation for the $3$-color version \cite{baligacs2024}.
Other settings involve multiple salesman that can only visit given subsets of the colors, or seek the shortest cycle that visits each color once.
In contrast, we aim for a single polychromatic tour that visits every node of $G$ while cycling through a large number of colors.



\subsection{Main results}

We initiate the study of metric PCTSP and Euclidean PCTSP in a fixed dimension.
Our first result is a polynomial-time constant-approximation algorithm for PCTSP in metric graphs.
\begin{restatable}{theorem}{approximation}
\label{thm:approximation}
    There is a polynomial-time $(3-2 \cdot 10^{-36})$-approximation algorithm for metric PCTSP.
\end{restatable}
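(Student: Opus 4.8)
The plan is to reduce the problem to two easily computable lower bounds on the optimum and then to assemble a feasible $\sigma$-cycle of comparable weight, invoking the Karlin et al.\ metric-TSP algorithm~\cite{karlin2021} as a black box on a single color class. Fix the cyclic order $\sigma$ realized by some optimal solution $C^*$ and write $\mathrm{OPT}=w(C^*)$. I would first record two structural facts that hold for \emph{any} $\sigma$-cycle $C$. (a) Deleting from $C$ every vertex whose color is not $V_{\sigma(r)}$ and joining consecutive survivors directly yields a Hamiltonian cycle on $V_{\sigma(r)}$, which by the triangle inequality has weight at most $w(C)$; hence the metric-TSP optimum on each color class is at most $\mathrm{OPT}$. (b) The edges of $C$ that join a $\sigma(r)$-vertex to its successor (which lies in $V_{\sigma(r+1\bmod k)}$) form a perfect matching $N_r$ between $V_{\sigma(r)}$ and $V_{\sigma(r+1)}$, and $N_0,\dots,N_{k-1}$ partition $E(C)$; consequently $\sum_{r=0}^{k-1}\mu(V_{\sigma(r)},V_{\sigma(r+1)})\le \mathrm{OPT}$, where $\mu(\cdot,\cdot)$ denotes the minimum weight of a perfect matching between two classes.

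The algorithm would then compute, for each $r$, a minimum-weight perfect matching $M_r$ between $V_{\sigma(r)}$ and $V_{\sigma(r+1)}$ and set $R=\bigcup_r M_r$. Every vertex of $R$ has degree exactly two, one edge toward its predecessor color and one toward its successor color, so $R$ is a disjoint union of cycles, each of which visits the color classes in exactly the order $\sigma$; moreover $w(R)\le\mathrm{OPT}$ by (b). If $R$ is connected we already have a solution within a factor of $1$. Otherwise I would merge its components $C_1,\dots,C_t$ into one $\sigma$-cycle using a backbone obtained from Karlin et al.: running their algorithm on the metric restricted to one color class, say $V_{\sigma(0)}$, returns a Hamiltonian cycle $H$ on $V_{\sigma(0)}$ with $w(H)\le(1.5-10^{-36})\,\mathrm{OPT}$ by (a). Traversing $H$, whenever two $H$-consecutive vertices $u,u'$ lie in distinct current components I would splice those two $\sigma$-cycles at $u$ and $u'$ (both have color $\sigma(0)$): delete the $R$-edge leaving each of them and reconnect the loose $\sigma(1)$-endpoints crosswise, which keeps the color pattern intact and, by two uses of the triangle inequality, increases the weight by at most $2w(u,u')$. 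Since at most $t-1$ splices occur and the $H$-edges used are distinct, the merging costs at most $2w(H)$. All ingredients — the $k$ matchings, one TSP approximation on $n/k$ points, and at most $t-1$ splices — run in polynomial time.

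The step I expect to be the real obstacle is pushing the weight bound below the "obvious'' value: the naive accounting above only gives $w(R)+2w(H)\le \mathrm{OPT}+2(1.5-10^{-36})\,\mathrm{OPT}=(4-2\cdot10^{-36})\,\mathrm{OPT}$, whereas the claimed ratio is $3-2\cdot10^{-36}=2(1.5-10^{-36})$. Closing this gap requires a charging argument that the $R$-edges destroyed during the splices are not present in the final cycle, so that essentially all of $w(R)$ cancels against the newly added edges and the final weight is dominated by $2w(H)$ rather than $w(R)+2w(H)$; concretely, one wants to arrange the matchings $M_r$ and the stitching so that the output $\sigma$-cycle decomposes into a near-optimal tour on $V_{\sigma(0)}$ plus "fill-in'' paths whose total length is charged once against the intra-class part of bound (b) and not a second time against $w(H)$. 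This bookkeeping — rather than any isolated gadget — is where the constant $3-2\cdot10^{-36}$ is forced and where the Karlin et al.\ factor $1.5-10^{-36}$ enters verbatim in place of Christofides' $1.5$. A few routine points remain: verifying that the splices can always be performed (there is always a same-colored pair because $H$ spans all of $V_{\sigma(0)}$), bounding the running time, and, if the cyclic order is not part of the input, observing that the lower bounds and the construction depend on $\sigma$ only through its "consecutive pair'' relation so that the guarantee is unaffected.
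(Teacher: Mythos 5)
Your construction of the matchings between consecutive color classes and the resulting disjoint $\sigma$-cycles matches the paper's first step, but there are two genuine gaps. First, your algorithm needs the cyclic order $\sigma$ of an optimal tour: the matchings $M_r$ are defined between classes that are consecutive under $\sigma$, and since $k$ is not assumed constant you cannot enumerate the $k!$ orders. Your closing remark that the construction ``depends on $\sigma$ only through its consecutive-pair relation'' does not resolve this, because finding a good consecutive-pair relation is itself a traveling-salesman problem on the $k$ color classes with edge weights $\mu(V_i,V_j)$. The paper devotes a separate step to exactly this: it builds the complete graph on the classes with these matching weights (which is again metric), runs the $(1.5-10^{-36})$-approximate TSP there, and thereby obtains an order $\hat{\sigma}$ whose total matching weight is at most $(1.5-10^{-36})\,\mathrm{OPT}$; this extra factor on the matching lower bound is an unavoidable part of how the final constant arises. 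Without such a step your argument only works when $k=O(1)$.

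Second, your splicing is too expensive, and the repair you sketch points in the wrong direction. As you note, deleting one $R$-edge at each of $u,u'$ and reconnecting crosswise costs up to $2w(u,u')$ per splice, so your accounting gives $w(R)+2w(H)\le(4-2\cdot10^{-36})\,\mathrm{OPT}$, and you propose to close the gap by cancelling $w(R)$ against the newly added edges. In the paper's analysis $w(R)$ is not cancelled at all; rather, the gluing itself is cheaper. One picks a single representative $v_C\in V_1$ per component, runs the approximate TSP only on these representatives to get a cycle $T$, and then, walking $T$, replaces for each component the one matching edge $(u_C,v_C)$ (with $u_C\in V_k$) by the edge $(u_C,v_{C'})$ to the next representative; a single application of the triangle inequality per component bounds the total overhead by $w(T)\le(1.5-10^{-36})\,\mathsf{tsp}(G)\le(1.5-10^{-36})\,\mathrm{OPT}$. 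This yields $(2.5-10^{-36})$ for a known order, and combined with the order-selection step above one gets $(1.5-10^{-36})\,\mathrm{OPT}+(1.5-10^{-36})\,\mathrm{OPT}=(3-2\cdot10^{-36})\,\mathrm{OPT}$, which is the claimed bound.
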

To complement the above algorithmic result, we rule out the existence of (Q)PTASes for metric PCTSP and even Euclidean PCTSP, by proving the following APX-hardness result.
\begin{restatable}{theorem}{hardness}
\label{thm:hardness}
    Euclidean PCTSP in $\mathbb{R}^2$ does not admit a PTAS unless P=NP. In particular, Euclidean PCTSP is APX-Hard.
\end{restatable}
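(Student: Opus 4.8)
I would prove Theorem~\ref{thm:hardness} by an approximation‑preserving (L‑)reduction from an APX‑hard graph problem of bounded combinatorial width, with TSP on bounded‑degree graphs with edge weights in $\{1,2\}$ (Papadimitriou--Yannakakis) as the natural source, recast as a minimization‑gap problem. The conceptual point is that, although Euclidean TSP in the plane admits a PTAS, the cyclic color constraint of PCTSP can be used to \emph{forbid} all but a prescribed sparse set of tour edges: if two points lie in color classes that are not cyclically adjacent in the (global, instance‑wide) order $\sigma$, then no $\sigma$‑cycle can connect them directly. So a polychromatic point configuration in $\mathbb{R}^2$ can simulate a constrained Hamiltonicity/TSP$(1,2)$ instance while its objective is still measured by ordinary Euclidean length, circumventing the Arora/Mitchell PTAS.

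\textbf{Construction.} Given a source instance $H$, I would place the $n$ points along $k$ vertical \emph{tracks} at $x$‑coordinates $1,\dots,k$ (with $k$ chosen from the structure of $H$) and give track $i$ the color class $V_i$. A valid $\sigma$‑cycle must then sweep through the tracks in the fixed cyclic order, performing $n/k$ monotone left‑to‑right passes and $n/k$ ``return'' jumps; the returns contribute an additive cost independent of the solution, and on each pass consecutive chosen points lie in adjacent tracks, so the only tunable degree of freedom is which track‑$i$‑to‑track‑$(i{+}1)$ ``move'' is used, controlled by the $y$‑coordinates. I would set these coordinates so that the \emph{intended} moves (those corresponding to edges of $H$ or to prescribed gadget transitions) all have essentially the same small length, while every other move is longer by at least a factor $1+\varepsilon_0$ for an absolute constant $\varepsilon_0>0$. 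Vertex, edge, and (if reducing from a CSP) clause gadgets would be built from blocks of tracks, each enforcing a local binary decision or consistency check via which of two near‑equal‑length cheap traversals the sweep uses, with any illegal local configuration forcing at least one move that is a constant factor too long. Keeping everything of polynomial size and the geometry non‑degenerate is where the bounded degree (and, if needed, planarity) of $H$ is used.

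\textbf{Cost analysis and conclusion.} I would show $\mathrm{OPT}_{\mathrm{PCTSP}}(\phi(H)) = A + B\cdot\mathrm{OPT}(H)$ for instance‑independent constants $A,B>0$, with $A = O(\mathrm{OPT}(H))$ so that the mandatory layout/return cost does not swamp the variable part (this is exactly what makes the gap survive as a \emph{multiplicative} one, as APX‑hardness requires), together with a local‑repair lemma: any $\sigma$‑cycle of cost $A + B\cdot\mathrm{OPT}(H) + t$ can be converted in polynomial time into a solution of $H$ of cost $\mathrm{OPT}(H) + O(t)$ by rerouting each over‑long move through its cheap alternative and reading off the induced binary choices. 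These two facts give an L‑reduction, hence a PTAS‑reduction, so a PTAS for Euclidean PCTSP in $\mathbb{R}^2$ would yield a PTAS for the APX‑hard source problem, forcing P $=$ NP; APX‑hardness is immediate.

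\textbf{Main obstacle.} I expect the crux to be the \emph{global} nature of the single cyclic color order: unlike standard planar‑gadget reductions, the gadgets here cannot be wired locally, since every gadget must be expressed inside the same $k$ classes visited in the same cyclic order, so the $n/k$ sweeps must \emph{collectively and consistently} encode all gadget decisions. Establishing soundness therefore requires ruling out clever non‑monotone or gadget‑skipping $\sigma$‑cycles — i.e., proving that in the Euclidean metric the cyclic constraint genuinely admits no shortcut beating the intended cost — and this, rather than the (routine) completeness direction, is where the real work lies.
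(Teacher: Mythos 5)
There is a genuine gap, and it sits exactly where you flagged your ``main obstacle'' --- but the issue is more fundamental than you suggest, and your proposed architecture does not contain a mechanism to resolve it. Your construction encodes the binary decisions of the source instance in \emph{which move the tour makes} on each left-to-right pass. Since there are $n/k$ passes and each gadget is realized by some subset of them, nothing forces different passes (or different gadgets sharing a ``variable'') to make consistent choices; a cheating tour can locally pick the cheap alternative in every gadget regardless of whether those choices correspond to any single solution of $H$. You acknowledge needing to rule this out but offer no device for doing so. The paper's reduction (from Max 2-SAT) avoids this problem entirely by a different encoding: the truth assignment lives in the permutation $\sigma$ itself (whether $T_\alpha \prec_\sigma F_\alpha$ or $F_\alpha \prec_\sigma T_\alpha$), and since $\sigma$ is a single global object shared by every gadget, consistency across clause gadgets is automatic. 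This is the key idea your proposal is missing, and it is not a detail one can expect to ``fill in'' within your track-sweep framework.

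A second, related gap: you write that ``a valid $\sigma$-cycle must sweep through the tracks in the fixed cyclic order,'' treating the color order as if the instance prescribed it. It does not --- PCTSP minimizes over \emph{all} permutations $\sigma$, so the solver may reorder your tracks arbitrarily, destroying the adjacency structure your soundness argument relies on (points in non-adjacent tracks become directly connectable under a different $\sigma$). Any correct reduction must either show that only the intended family of permutations can be near-optimal or show how to convert a tour under a ``bad'' $\sigma$ into one under a good $\sigma$ at no extra cost; the paper devotes the auxiliary point set $S$ and Lemma~\ref{lem:invalid} to exactly this, by making invalid permutations pay in $S$ at least as much as they could save in the gadgets. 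Without an analogue of both mechanisms --- a globally consistent encoding of decisions and a penalty structure controlling which $\sigma$ are competitive --- your sketch does not yield a sound L-reduction.
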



\paragraph{Organization.}
The rest of the paper is organized as follows.
In Section~\ref{sec-pre}, we give the basic notions and preliminaries that are required for our results.
Section~\ref{sec-approx} presents our approximation algorithm for metric PCTSP, and Section~\ref{sec-hardness} presents our hardness result for Euclidean PCTSP.

\section{Preliminaries} \label{sec-pre}
For a graph $G$, we use $V(G)$ and $E(G)$ to denote the vertex set and edge set of $G$, respectively.
For $V \subseteq V(G)$, we denote by $G[V]$ the graph induced in $G$ by $V$.
Let $(G,\mathcal{P},w)$ be a polychromatic graph where $\mathcal{P} = \{V_1,\dots,V_k\}$.
We call a cycle in $G$ a \textit{polychromatic cycle} in $(G,\mathcal{P},w)$ if it is a $\sigma$-cycle for some permutation of $[k]$.
We denote by $\mathcal{H}_G$ the set of all Hamiltonian $\sigma$-cycles of $G$: $\mathcal{H}_G = \{\pi: \text{$\pi$ is a polychromatic cycle and }V(\pi) = V(G)\}$.

In metric PCTSP, we are given a polychromatic graph $(G,\mathcal{P},w)$ such that $G$ is complete and $w((v, v')) \leq w((v, v'')) + w((v'', v'))$ for all $v, v', v'' \in V(G)$.
We are to compute $\arg\min_{\pi \in \mathcal{H}_G}{w(\pi)}$.
In Euclidean PCTSP, the input is $(P, \mathcal{P})$ for a set of points $P \subset \mathbb{R}^d$, and partition of the points $\mathcal{P} = \{P_1, \cdots, P_k\}$.
The task is to solve metric PCTSP on input $(G, \mathcal{P'}, w)$, where $G$ is the complete graph with vertices $V(G) = \{v_p : p \in P\}$, $\mathcal{P'} = \{P_1', \cdots, P_k'\}$ s.t. $P_i' =\{v_{p}:p \in P_i\}$ for all $1 \leq i \leq k$, and $w((v_{p}, v_{p'})) = ||p-p'||_2$. $||\cdot||_2$ denotes the Euclidean norm.
For convenience, we similarly refer to a cycle of the points $\pi = (p_0, \cdots, p_r)$ where $p_r = p_0$ as a $\sigma$-cycle if $p_i \in P_{\sigma(i \text{ mod } k)}$ for all $i \in \{0\} \cup [r]$.
The length of $\pi$ is $||\pi|| = \sum_{i\in[r]} ||p_i - p_{i+1}||_2$.
Finally, we use the term $\sigma$-tour to mean a Hamiltonian $\sigma$-cycle.
\section{Constant approximation for metric PCTSP}
\label{sec-approx}

In this section, we give our polynomial-time constant-approximation algorithm for metric PCTSP.
Consider an instance $(G,\mathcal{P},w)$ of metric PCTSP, where $|V(G)| = n$ and $\mathcal{P} = \{V_1,\dots,V_k\}$.

At a high level, we solve the problem via two steps.
First, we give an algorithm that can compute in polynomial time, for any given permutation $\sigma$ of $[k]$, an $O(1)$-approximation for the optimal $\sigma$-tour.
When $k=O(1)$, this already solves metric PCTSP since we can simply run the algorithm on all $k!$ possibilities of $\sigma$.
However, we do not have the assumption that $k$ is a constant in PCTSP.
Therefore, in the second step, we show how to approximate the optimal ordering $\sigma$ in polynomial time.

\subsection{Computing the tour for a fixed order}

We first consider how to compute a good $\sigma$-tour in $G$ for a given ordering $\sigma$ of $[k]$.
Without loss of generality, we can assume that $\sigma = (1,\dots,k)$.
For convenience, define $V_{k+1} = V_1$.
Our algorithm is presented in Algorithm~\ref{alg-fixedordertour}.
The sub-routine $\textsc{MinMatching}(G,w,V_i,V_{i+1})$ in line~2 computes a minimum-weight perfect matching $M_i$ between $V_i$ and $V_{i+1}$ under the weight function $w$, which must exist because $|V_i| = |V_{i+1}|$ and $G$ is a complete graph.
Define $H$ as the subgraph of $G$ consisting of the edges in $\bigcup_{i=1}^k M_i$ (line~3), and let $\mathcal{C}$ be the set of connected components of $H$ (line~4).
Note that each $C \in \mathcal{C}$ is a $\sigma$-cycle in $G$.
Next, we arbitrarily pick a vertex $v_C \in V(C) \cap V_1$ in each $C \in \mathcal{C}$ (line~6) and let $V_1' \subseteq V_1$ be the set of these vertices (line~7).
We then call the sub-routine $\textsc{ApproxTSP}(G[V_1'],w)$ in line~8, which computes a $(1.5-10^{-36})$-approximation TSP $T$ of the subgraph $G[V_1']$ under the weight function $w$, which can be done using the algorithm of Karlin et al.~\cite{karlin2021}.
Finally, we ``glue'' the cycles in $\mathcal{C}$ along $T$ as follows to obtain a Hamiltonian $\sigma$-cycle $S$ (line~9).
Write $\mathcal{C} = \{C_1,\dots,C_t\}$ such that $T = (v_{C_1},\dots,v_{C_k},v_{C_1})$.
For $i \in [t]$, let $u_{C_i}$ be the neighbor of $v_{C_i}$ in $C_t$ that belongs to $V_k$.
Then we take the disjoint $\sigma$-cycles $C_1,\dots,C_t$, remove the edges $(u_{C_1},v_{C_1}),\dots,(u_{C_k},v_{C_k})$, and add the edges $(u_{C_1},v_{C_2}),\dots,(u_{C_{k-1}},v_{C_k}), (u_{C_k},v_{C_1})$.
This results in a Hamiltonian cycle $S$ in $G$, and one can easily verify that $S$ is a $\sigma$-cycle.
We return $S$ as the output of our algorithm (line~10).

\begin{algorithm}[htbp]
    \caption{\textsc{FixedOrderTour}$(G,\{V_1, \cdots, V_k\},w)$}
    \begin{algorithmic}[1]
        \For{every $i \in [k]$}
            \State $M_i \leftarrow \textsc{MinMatching}(G,w,V_i,V_{i+1})$
        \EndFor
        \State $H \leftarrow (V(G),\bigcup_{i=1}^k M_i)$
        \State $\mathcal{C} \leftarrow$ set of connected components of $H$
        \For{every $C \in \mathcal{C}$}
            \State pick an arbitrary vertex $v_C \in V(C) \cap V_1$
        \EndFor
        \State $V_1' \leftarrow \{v_C: C \in \mathcal{C}\}$
        \State $T \leftarrow \textsc{ApproxTSP}(G[V_1'],w)$
        \State $S \leftarrow \textsc{Glue}(T,\mathcal{C})$
        \State \textbf{return} $S$
    \end{algorithmic}
    \label{alg-fixedordertour}
\end{algorithm}

In the rest of this section, we show that $S$ is a $(2.5-10^{-36})$-approximation for the optimal $\sigma$-tour of $G$.

\begin{lemma} \label{lem-boundingmatching}
    $\sum_{i=1}^k \sum_{e \in M_i} w(e) \leq \sum_{e \in E(R)} w(e)$ for any Hamiltonian $\sigma$-cycle $R$ in $G$.
\end{lemma}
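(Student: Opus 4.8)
The plan is to show that the union of the $k$ minimum-weight perfect matchings $M_1,\dots,M_k$ is no heavier than any single Hamiltonian $\sigma$-cycle $R$, by decomposing $R$ itself into $k$ perfect matchings, one between each consecutive pair of color classes, and comparing each piece with the corresponding $M_i$.

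First I would fix an arbitrary Hamiltonian $\sigma$-cycle $R$ in $G$; since $R$ is a $\sigma$-cycle with $\sigma = (1,\dots,k)$, writing $R = (v_0, v_1, \dots, v_{n-1}, v_0)$ we have $v_j \in V_{(j \bmod k)+1}$ (up to the indexing convention in the paper), so each edge of $R$ joins a vertex of some $V_i$ to a vertex of $V_{i+1}$ (indices mod $k$, with $V_{k+1}=V_1$). I would then partition $E(R)$ into $k$ classes $N_1,\dots,N_k$, where $N_i$ consists of exactly those edges of $R$ that go between $V_i$ and $V_{i+1}$. The key combinatorial observation is that each $N_i$ is a \emph{perfect} matching between $V_i$ and $V_{i+1}$: as we traverse the cycle $R$, we enter each vertex of $V_i$ exactly once along an edge of $N_{i-1}$ and leave it exactly once along an edge of $N_i$, so every vertex of $V_i$ is covered exactly once by $N_i$ and, symmetrically, every vertex of $V_{i+1}$ is covered exactly once by $N_i$. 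Hence $N_i$ is a candidate in the minimization solved by $\textsc{MinMatching}(G,w,V_i,V_{i+1})$, which gives $\sum_{e \in M_i} w(e) \le \sum_{e \in N_i} w(e)$ for every $i \in [k]$.

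Summing this inequality over $i = 1,\dots,k$ and using that $\{N_1,\dots,N_k\}$ partitions $E(R)$, we get
\[
\sum_{i=1}^k \sum_{e \in M_i} w(e) \;\le\; \sum_{i=1}^k \sum_{e \in N_i} w(e) \;=\; \sum_{e \in E(R)} w(e),
\]
which is exactly the claimed bound. Note that metricity of $w$ is not even needed here; only the existence of the matchings (guaranteed since all color classes have equal size and $G$ is complete) and the structural decomposition of $R$ matter.

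The only real point requiring care — the ``main obstacle'', though it is a mild one — is verifying cleanly that each $N_i$ is genuinely a perfect matching between $V_i$ and $V_{i+1}$ rather than just a subgraph of bounded degree; this hinges on the fact that in a $\sigma$-cycle the colors along the cycle are forced to appear in the strict repeating pattern $1,2,\dots,k,1,2,\dots$, so consecutive edges around any vertex of $V_i$ are one edge to $V_{i-1}$ and one edge to $V_{i+1}$. I would state this as a short lemma or inline argument about the color pattern of $\sigma$-cycles, after which the counting (each $v \in V_i$ has exactly one incident edge in $N_i$, and $|V_i| = |V_{i+1}|$) is immediate and the rest is just summation.
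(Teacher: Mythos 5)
Your proposal is correct and follows exactly the paper's argument: partition the edges of $R$ into the $k$ perfect matchings between consecutive color classes $V_i$ and $V_{i+1}$, compare each with the minimum-weight matching $M_i$, and sum. The paper states this in two sentences; your version merely spells out the (correct) verification that each $N_i$ is indeed a perfect matching.
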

\begin{proof}
Note that the edges of $R$ between $V_i$ and $V_{i+1}$ form a perfect matching between $V_i$ and $V_{i+1}$, for $i \in [k]$.
Since $M_i$ is a minimum-cost perfect matching between $V_i$ and $V_{i+1}$, we have $\sum_{i=1}^k \sum_{e \in M_i} w(e) \leq \sum_{e \in E(R)} w(e)$.
\end{proof}

\begin{lemma} \label{lem-match+tsp}
    $\sum_{e \in E(S)} w(e) \leq \sum_{i=1}^k \sum_{e \in M_i} w(e) + (1.5-10^{-36}) \cdot \mathsf{tsp}(G)$, where $\mathsf{tsp}(G)$ is the minimum weight of a Hamiltonian cycle of $G$.
\end{lemma}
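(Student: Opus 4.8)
The plan is to compare the weight of $S$ with the weight of $H$ edge by edge, and then absorb the difference into the weight of the approximate TSP tour $T$. First I would record that the cycles $C_1,\dots,C_t \in \mathcal{C}$ are vertex-disjoint and that their edges are exactly the edges of $\bigcup_{i=1}^k M_i$, so $\sum_{C\in\mathcal{C}} w(C) = \sum_{i=1}^k \sum_{e\in M_i} w(e)$. The gluing step $\textsc{Glue}(T,\mathcal{C})$ starts from the disjoint union $C_1 \cup \dots \cup C_t$, deletes the $t$ edges $(u_{C_1},v_{C_1}),\dots,(u_{C_t},v_{C_t})$, and inserts the $t$ edges $(u_{C_1},v_{C_2}),\dots,(u_{C_{t-1}},v_{C_t}),(u_{C_t},v_{C_1})$. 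Hence
\[
\sum_{e\in E(S)} w(e) \;=\; \sum_{i=1}^k \sum_{e\in M_i} w(e) \;-\; \sum_{i=1}^t w\big((u_{C_i},v_{C_i})\big) \;+\; \sum_{i=1}^t w\big((u_{C_i},v_{C_{i+1}})\big),
\]
with indices modulo $t$.

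Next I would bound the inserted edges against the deleted edges using the metric inequality: for each $i$,
\[
w\big((u_{C_i},v_{C_{i+1}})\big) \;\le\; w\big((u_{C_i},v_{C_i})\big) + w\big((v_{C_i},v_{C_{i+1}})\big).
\]
Substituting this into the previous display, the $w\big((u_{C_i},v_{C_i})\big)$ terms cancel and we are left with
\[
\sum_{e\in E(S)} w(e) \;\le\; \sum_{i=1}^k \sum_{e\in M_i} w(e) \;+\; \sum_{i=1}^t w\big((v_{C_i},v_{C_{i+1}})\big) \;=\; \sum_{i=1}^k \sum_{e\in M_i} w(e) \;+\; w(T),
\]
where the last equality uses that the $C_i$ are labeled so that $T = (v_{C_1},\dots,v_{C_t},v_{C_1})$.

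Finally I would bound $w(T)$. Since $\textsc{ApproxTSP}$ is the algorithm of Karlin et al.\ applied to $G[V_1']$, we have $w(T) \le (1.5-10^{-36})\cdot \mathsf{tsp}(G[V_1'])$. It then suffices to observe that $\mathsf{tsp}(G[V_1']) \le \mathsf{tsp}(G)$: taking an optimal Hamiltonian cycle of $G$ and shortcutting it past the vertices of $V(G)\setminus V_1'$ yields a Hamiltonian cycle of $G[V_1']$ whose weight does not increase, by the triangle inequality (here we use that $G$ is metric and $V_1'\subseteq V(G)$). Chaining the three displayed inequalities gives the claim.

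I expect essentially no serious obstacle here — the argument is a short computation — but the two points that require care are (i) confirming that $\textsc{Glue}$ alters exactly the $2t$ stated edges, in particular that $u_{C_i}\ne v_{C_i}$ (which holds because each $C_i$ is a $\sigma$-cycle of length at least $k$, so the vertex $v_{C_i}\in V_1$ really does have a distinct neighbor $u_{C_i}\in V_k$ in $C_i$), and (ii) the metric shortcutting step used to pass from $\mathsf{tsp}(G[V_1'])$ to $\mathsf{tsp}(G)$, since $T$ is computed on the induced subgraph but the lemma is stated in terms of $\mathsf{tsp}(G)$.
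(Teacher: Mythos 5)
Your proof is correct and follows essentially the same route as the paper: express $w(S)$ as the matching weight plus the net change from the gluing step, bound each inserted edge by the deleted edge plus an edge of $T$ via the triangle inequality, and then bound $w(T)$ by $(1.5-10^{-36})\cdot\mathsf{tsp}(G)$. The only difference is that you spell out the shortcutting argument for $\mathsf{tsp}(G[V_1'])\le\mathsf{tsp}(G)$, which the paper states in one line as a consequence of $G[V_1']$ being an induced subgraph of a metric graph.
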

\begin{proof}
For convenience, let us write $C_{t+1} = C_1$.
We observe that $\sum_{e \in E(S)} w(e) = \sum_{i=1}^k \sum_{e \in M_i} w(e) + \sum_{i=1}^t (w((u_{C_i},v_{C_{i+1}}))-w((u_{C_i},v_{C_i})))$.
Since $G$ is a metric graph, we have $w((u_{C_i},v_{C_{i+1}}))-w((u_{C_i},v_{C_i})) \leq w((v_{C_i},v_{C_{i+1}}))$.
It follows that $\sum_{i=1}^t (w((u_{C_i},v_{C_{i+1}}))-w((u_{C_i},v_{C_i}))) \leq \sum_{e \in E(T)} w(e)$.
Recall that $T$ is a $(1.5-10^{-36})$-approximation TSP $T$ of $G[V_1']$.
As $G[V_1']$ is an induced subgraph of $G$ and $G$ is a metric graph, we have $\sum_{e \in E(T)} w(e) \leq (1.5-10^{-36}) \mathsf{tsp}(G)$.
\end{proof}

Since $\mathsf{tsp}(G) \leq \sum_{e \in E(R)} w(e)$ for any Hamiltonian $\sigma$-cycle $R$ in $G$, the above lemmas imply that $\sum_{e \in E(S)} w(e) \leq (2.5-10^{-36}) \sum_{e \in E(R)} w(e)$ for any Hamiltonian $\sigma$-cycle $R$ in $G$.

\subsection{Solving metric PCTSP}

In this section, we show how to compute a good ordering $\sigma$.
Together with the algorithm in the previous section, this shall give us the final algorithm for Theorem~\ref{thm:approximation}.

For $i,j \in [k]$ with $i \neq j$, we compute a minimum-cost perfect matching $M_{i,j}$ between $V_i$ and $V_j$ in $G$.
Let $\Sigma_k$ denote the set of all permutations of $[k]$.
For $\sigma = (r_1,\dots,r_k) \in \Sigma_k$, let $H_\sigma$ denote the subgraph of $G$ consisting of the edges in $\bigcup_{i=1}^k M_{r_i,r_{i+1}}$; here we set $r_{k+1} = r_1$ for convenience.
Define $w_\sigma = \sum_{e \in E(H_\sigma)} w(e)$.

Suppose $\mathsf{opt}$ is the optimum of the PCTSP instance, i.e., the weight of an optimal solution.
Note that $\mathsf{opt} \geq \mathsf{tsp}(G)$.
By Lemma~\ref{lem-boundingmatching}, we have $\mathsf{opt} \geq \min_{\sigma \in \Sigma_k} w_\sigma$.
If $\hat{\sigma} \in \Sigma_k$ satisfies that $w_{\hat{\sigma}} \leq c \cdot \min_{\sigma \in \Sigma_k} w_\sigma$, then Lemma~\ref{lem-match+tsp} further implies that when running the algorithm in the previous section on $\hat{\sigma}$, the output tour $S$ satisfies $\sum_{e \in E(S)} w(e) \leq (c+1.5-10^{-36}) \cdot \mathsf{opt}$.
Based on this observation, it suffices to find $\hat{\sigma} \in \Sigma_k$ that (approximately) minimizes $w_{\hat{\sigma}}$.

We achieve this goal as follows.
Build a complete graph $G'$ with $V(G') = \{V_1,\dots,V_k\}$, and define a weight function $w':E(G') \rightarrow \mathbb{R}_{\geq 0}$ by setting $w'((V_i,V_j)) = \sum_{e \in E(M_{i,j})} w(e)$.
One can easily check that $G'$ is also a metric graph.
Observe that for each $\sigma = (r_1,\dots,r_k) \in \Sigma_k$, $w_\sigma$ is just equal to the weight of the Hamiltonian cycle $(V_{r_1},\dots,V_{r_k},V_{r_1})$ in $G'$ under $w'$.
As such, we simply compute a $(1.5-10^{-36})$-approximation TSP $(V_{\hat{r}_1},\dots,V_{\hat{r}_k},V_{\hat{r}_1})$ in $G'$.
Then $\hat{\sigma} = (\hat{r}_1,\dots,\hat{r}_k)$ satisfies that $w_{\hat{\sigma}} \leq (1.5-10^{-36}) \cdot \min_{\sigma \in \Sigma_k} w_\sigma$.
Therefore, if we apply the algorithm in the previous section on $\hat{\sigma}$, the output tour $S$ satisfies $\sum_{e \in E(S)} w(e) \leq (3-10^{-36}) \cdot \mathsf{opt}$.
This completes the proof of Theorem~\ref{thm:approximation}.


\approximation*


\paragraph{Remark.}
As one can verify from our algorithm, the approximation factor and the running time of the algorithm in Theorem~\ref{thm:approximation} in fact depend on the best matching algorithm and metric TSP algorithm. 
Specifically, let $M(n)$ be the time for computing a minimum-cost bipartite matching in a complete bipartite graph with $n$ vertices and $T(n)$ be the time for computing an $\alpha$-approximate TSP in a metric graph with $n$ vertices.
Then we get a $2\alpha$-approximation algorithm for PCTSP with running time $O(k^2 M(2n/k) + T(k) + T(n/k))$.

\newcommand{\seta}{A}
\newcommand{\setb}{B}
\newcommand{\setc}{C}

\section{APX-Hardness for Euclidean PCTSP in $\mathbb{R}^2$}
\label{sec-hardness}

In this section, we prove Theorem \ref{thm:hardness} by a reduction from Max 2-SAT, which is known to be APX-hard \cite{papadimitriou1991}.
The input to Max 2-SAT is a set of $n$ variables $\{x_1, \cdots, x_n\}$ and $m$ clauses $\{c_1, \cdots, c_m\}$. Each clause is a conjunction of two literals; a literal is a variable $x_i$ or its negation $\bar{x}_i$.
The task is to find a boolean assignment of the variables that satisfies the maximum number of clauses.

The general idea behind our reduction is to encode the truth assignment of SAT variables in the choice of permutation $\sigma$. We then test a corresponding tour with a series of clause gadgets; informally, if $\sigma$ fails to ``satisfy'' a clause, then the clause gadget will penalize the tour by a fixed small amount. The structure of the proof is as follows. We first describe how to encode a truth assignment using $\sigma$. We then introduce our clause gadget and reason about its structure in Lemma ~\ref{lem:gadget}. Finally, we build the full construction using a series of clause gadgets and an auxiliary set of points $S$. The points of $S$ serve two purposes: (1) they rule out the possibility of $\sigma$ that does not properly encode a truth assignment, and (2) they divert the return trip of the tour away from the clause gadgets so that we may reason about a single pass through our gadgets. 

Our construction uses $k = 3n + 1$ color classes (point sets), which we will label $R_\alpha$, $T_\alpha$, $F_\alpha$ for $\alpha \in [1, n]$, and the additional $R_{n+1}$.
For a permutation $\sigma$, let $\prec_\sigma$ be the total order of the classes under $\sigma$, i.e. $V_{\sigma(0)} \prec_\sigma V_{\sigma(1)} \cdots \prec_\sigma V_{\sigma(k-1)}$.
We say that a permutation $\sigma$ is \textit{valid} if it satisfies:
\begin{enumerate}
\item $R_\alpha \prec_\sigma T_\alpha$, $R_\alpha \prec_\sigma F_\alpha$, for all $1 \leq \alpha \leq n$ and
\item $T_\alpha \prec_\sigma R_{\beta}$, $F_\alpha \prec_\sigma R_{\beta}$, for all $1 \leq \alpha < \beta \leq n+1$.
\end{enumerate}
Observe that there are exactly $2^n$ valid permutations that naturally correspond to truth assignments of the variables $\{x_1, \cdots, x_n\}$. 
For each $\alpha \in [1, n]$, we only have the freedom to choose whether $T_\alpha \prec_\sigma F_\alpha$ or $F_\alpha \prec_\sigma T_\alpha$.
We say that $\sigma$ \emph{satisfies} the literal $x_\alpha$ if $T_\alpha \prec_\sigma F_\alpha$; likewise $\sigma$ satisfies $\bar{x}_\alpha$ if $F_\alpha \prec_\sigma T_\alpha$.
$\sigma$ satisfies a clause $c_i$ if it satisfies either of the literals of $c_i$.

\subsection{Clause gadget}

The crux of our reduction lies in the creation of a clause gadget that tests $\sigma$ on whether it satisfies a given clause.
This test can only depend on $\sigma$'s encoding of the two variables in that clause, so the gadget will need to allow the tour to ``jump through'' all of the colors that precede $\{T_\alpha, F_\alpha\}$, that fall between $\{T_\alpha, F_\alpha\}$, $\{T_\beta, F_\beta\}$, and that come after $\{T_\beta, F_\beta\}$, respectively. 
Formally, we define these color sets for use in the gadget, for all $\alpha, \beta$, s.t. $1 \leq \alpha \leq \beta \leq n$:
\begin{itemize}
\item $\seta_{\alpha,\beta} = \{R_{\alpha'}, T_{\alpha'}, F_{\alpha'} | 0 < \alpha' < \alpha\} \cup \{R_\alpha\}$
\item $\setb_{\alpha,\beta} = \{R_{\alpha'}, T_{\alpha'}, F_{\alpha'} | \alpha < \alpha' < \beta\} \cup \{R_\beta\}$
\item $\setc_{\alpha,\beta} = \{R_{\alpha'}, T_{\alpha'}, F_{\alpha'} | \beta < \alpha' < n+1\} \cup \{R_{n+1}\}$
\end{itemize}
When the indices $\alpha$ and $\beta$ are clear, we abbreviate the sets as $\seta, \setb, \setc$.
These sets are nonempty, and any valid $\sigma$ satisfies (by definition):
\begin{equation}
\label{eq:valid}
\seta \prec_\sigma \{T_\alpha, F_\alpha\}, \prec_\sigma \setb \prec_\sigma \{T_\beta, F_\beta\} \prec_\sigma \setc
\end{equation}

Figure \ref{fig:gadget} shows the gadget, along with (optimal) example paths for when $\sigma$ does, and does not satisfy the clause.
We describe the position of the points of the gadget (and later the spacing between gadgets) in terms of constants $\smallc$ and $\mediumc$, with the intuition that $\mediumc \gg \smallc \gg 1$.
Let $x > 0$.
A clause gadget $C_i$ for the clause $c_i = x_\alpha \vee x_\beta$ anchored at $x$ is the following collection of points.\footnote{For clauses that include negative literals (e.g. $\bar{x}_\alpha$ instead of $x_\alpha$), we swap the corresponding colors ($T_\alpha$ and $F_\alpha$) in the gadget construction.
The key is that the satisfying color appears on the line $y = -1$, and the other on the line $y = 1$.
Without loss of generality, we will assume these colors are labeled as $T_\alpha$ and $T_\beta$ in our analysis.}
(When we refer to a set of colors $X$ at a location, we mean a set of $|X|$ coincident points at that location, one of each color in the set $X$.)
\begin{itemize}
    \item Set $\seta$ at $(x, -1)$
    \item Points of color $F_\alpha$, $T_\alpha$ at $(x + \smallc, 1), (x + \smallc, -1)$, resp.
    \item Two sets $\setb$ at $(x+2\smallc, \pm1)$
    \item Points of $F_\beta$, $T_\beta$ at $(x + 2\smallc, 1), (x + 2\smallc, -1)$, resp.
    \item Set $\setc$ at $(x + 3\smallc, 1)$
    \item Set $\seta \cup \{T_\alpha,F_\alpha\}$ at $(x + 3\smallc, 0)$
    \item Set $\{T_\alpha, F_\alpha\}$ at $(x + 9\smallc, 0)$ 
    \item Set $\setc$ at $(x + 27\smallc, -1)$ 
\end{itemize}

\begin{figure*}
    \centering
    \newcommand{\drawGraph}[2]{
        \begin{subfigure}{\textwidth} 
            \centering
            \begin{tikzpicture}
                \def\a{2.5}
                \def\dx{0.5} 
                \def\stretch{1.2} 
                \def\stretchh{0.5} 
                \def\h{0.8}
                \def\arrowHeight{\h+0.65} 
                \def\sep{2}

                \node (p1) at (0, -\h) [circle, draw, inner sep=\sep pt] {\small $p_1$};
                \node (p2) at (\a, \h) [circle, draw, inner sep=\sep pt] {\small $p_2$};
                \node (p3) at (\a, -\h) [circle, draw, inner sep=\sep pt] {\small $p_3$};
                \node (p4) at (2*\a-\dx, \h) [circle, draw, inner sep=\sep pt] {\small $p_4$};
                \node (p5) at (2*\a-\dx, -\h) [circle, draw, inner sep=\sep pt] {\small $p_5$};
                \node (p6) at (2*\a+\dx, \h) [circle, draw, inner sep=\sep pt] {\small $p_6$};
                \node (p7) at (2*\a+\dx, -\h) [circle, draw, inner sep=\sep pt] {\small $p_7$};
                \node (p8) at (3*\a, \h) [circle, draw, inner sep=\sep pt] {\small $p_8$};
                \node (p9) at (3*\a, 0) [circle, draw, inner sep=\sep pt] {\small $p_9$};
                \node (p10) at (3*\a+\stretch*\a, 0) [circle, draw, inner sep=\sep pt] {\small $p_{10}$};
                \node (p11) at (3*\a+\stretch*\a+\stretchh*\a, -\h) [circle, draw, inner sep=\sep pt] {\small $p_{11}$};

                \node [above left, xshift=-4pt, yshift=1pt] at (p1) {\small $\seta$};
                \node [above left, xshift=-4pt, yshift=1pt] at (p2) {\small $F_\alpha$};  
                \node [below left, xshift=-4pt, yshift=-1pt] at (p3) {\small $T_\alpha$};  
                \node [above left, xshift=-4pt, yshift=1pt] at (p4) {\small $\setb$};  
                \node [below left, xshift=-4pt, yshift=-1pt] at (p5) {\small $\setb$};  
                \node [above right, xshift=4pt, yshift=0pt] at (p6) {\small $F_\beta$};
                \node [below right, xshift=4pt, yshift=-1pt] at (p7) {\small $T_\beta$};
                \node [above right, xshift=4pt, yshift=1pt] at (p8) {\small $\setc$};
                \node [right, xshift=3pt] at (p9) {\small $\seta \cup \{T_\alpha, F_\alpha\}$};
                \node [above right, xshift=4pt, yshift=1pt] at (p10) {\small $\{T_\beta, F_\beta\}$};
                \node [above right, xshift=4pt, yshift=2pt] at (p11) {\small $\setc$};

                \draw[<->] (0, \arrowHeight) -- (\a, \arrowHeight) node[midway, above] {\small $a$};
                \draw[<->] (\a, \arrowHeight) -- (2*\a-\dx, \arrowHeight) node[midway, above] {\small $a$};
                \draw[<->] (2*\a-\dx, \arrowHeight) -- (2*\a+\dx, \arrowHeight) node[midway, above] {\small $0$};  
                \draw[<->] (2*\a+\dx, \arrowHeight) -- (3*\a, \arrowHeight) node[midway, above] {\small $a$};
                \draw[<->] (3*\a, \arrowHeight) -- (3*\a+\stretch*\a, \arrowHeight) node[midway, above] {\small $6a$};
                \draw[<->, dashed] (3*\a+\stretch*\a, \arrowHeight) -- (3*\a+\stretch*\a+\stretchh*\a, \arrowHeight) node[midway, above] {\small $18a$};
                
                #2

            \end{tikzpicture}
            \caption{#1} 
        \end{subfigure}
    }

    \drawGraph{Optimal $\sigma$-path visiting $C_i$ when $\sigma$ does not satisfy the clause, i.e. $F_\alpha \prec_\sigma T_\alpha$, $F_\beta \prec_\sigma T_\beta$}{ 
        \draw[->, thick] (p1) -- (p2);
        \draw[->, thick] (p2) -- (p3);
        \draw[->, thick] (p3) -- (p4);
        \draw[->, thick] (p4) -- (p6);
        \draw[->, thick] (p6) -- (p7);
        \draw[->, thick] (p7) -- (p8);
        \draw[->, thick] (p8) -- (p9);
        \draw[->, thick] (p9) -- (p5);
        \draw[->, thick] (p5) -- (p10);
        \draw[->, thick, dashed] (p10) -- (p11);
    }

    \drawGraph{Optimal $\sigma$-path visiting $C_i$ where $\sigma$ satisfies $x_\alpha$ but not $x_\beta$, i.e. $T_\alpha \prec_\sigma F_\alpha$, $F_\beta \prec_\sigma T_\beta$}{ 
        \draw[->, thick] (p1) -- (p3);
        \draw[->, thick] (p3) -- (p2);
        \draw[->, thick] (p2) -- (p4);
        \draw[->, thick] (p4) -- (p6);
        \draw[->, thick] (p6) -- (p7);
        \draw[->, thick] (p7) -- (p8);
        \draw[->, thick] (p8) -- (p9);
        \draw[->, thick] (p9) -- (p5);
        \draw[->, thick] (p5) -- (p10);
        \draw[->, thick, dashed] (p10) -- (p11);
    }

    \drawGraph{Optimal $\sigma$-path visiting $C_i$ where $\sigma$ satisfies $x_\beta$ but not $x_\alpha$, i.e. $F_\alpha \prec_\sigma T_\alpha$, $T_\beta \prec_\sigma F_\beta$}{ 
        \draw[->, thick] (p1) -- (p2);
        \draw[->, thick] (p2) -- (p3);
        \draw[->, thick] (p3) -- (p5);
        \draw[->, thick] (p5) -- (p7);
        \draw[->, thick] (p7) -- (p6);
        \draw[->, thick] (p6) -- (p8);
        \draw[->, thick] (p8) -- (p9);
        \draw[->, thick] (p9) -- (p4);
        \draw[->, thick] (p4) -- (p10);
        \draw[->, thick, dashed] (p10) -- (p11);
    }

    \drawGraph{Optimal $\sigma$-path visiting $C_i$ where $\sigma$ satisfies both $x_\alpha$ and $x_\beta$, i.e. $T_\alpha \prec_\sigma F_\alpha$, $T_\beta \prec_\sigma F_\beta$}{ 
        \draw[->, thick] (p1) -- (p3);
        \draw[->, thick] (p3) -- (p2);
        \draw[->, thick] (p2) -- (p5);
        \draw[->, thick] (p5) -- (p7);
        \draw[->, thick] (p7) -- (p6);
        \draw[->, thick] (p6) -- (p8);
        \draw[->, thick] (p8) -- (p9);
        \draw[->, thick] (p9) -- (p4);
        \draw[->, thick] (p4) -- (p10);
        \draw[->, thick, dashed] (p10) -- (p11);
    }
    
    \caption{The clause gadget $C_i$ for clause $c_i = x_\alpha \lor x_\beta$ showing four possible $\sigma$-paths.
    The gadget consists of $6m + 2$ points (two of each color) placed at 9 distinct locations.
    (Notice that $p_4, p_6$ and $p_5, p_7$ coincide, but we distinguish them to make it clear when a path revisits a location.)
    At an intuitive level, the gadget tests whether $\sigma$ satisfies $x_\alpha$ (i.e. that $T_\alpha \prec_\sigma F_\alpha$) in the triangle $(p_1, p_2, p_3)$.
    Observe that path $(a)$ does not satisfy $x_\alpha$, and must take the longer leg $(p_1, p_2)$, while path $(b)$ takes the shorter leg $(p_2, p_3)$ by satisfying $x_\alpha$.
    Similarly, triangle $(p_6, p_7, p_8)$ tests $x_\beta$. 
    However, between these two tests, the tour has the choice of which set $B$ to visit, either at $p_4$ or $p_5$, such that satisfying exactly one of $x_\alpha$ or $x_\beta$ allows the tour to take an edge of rectangle $(p_2, p_3, p_5, p_4)$ while satisfying both (or neither) of $x_\alpha$, $x_\beta$ forces the path to take a diagonal.
    This avoids over-rewarding $\sigma$ for satisfying both $x_\alpha$ and $x_\beta$, and is the main idea behind the gadget.}
    \label{fig:gadget}
\end{figure*}

Let us first establish the length of the optimal path through the gadget.
To do so, we need to clarify what it means for a path to be polychromatic.
A path is a \emph{$\sigma$-path} if it can be written as $(v_1, \cdots, v_{r-1})$ such that for some $j$, $v_i \in V_{\sigma(i+j \mod k)}$ for all $i \in [r]$.
Note that a $\sigma$-path may start at a point of any color.
We will also need the following definitions.
Let a permutation $\sigma$ be \emph{$\alpha,\beta$-valid} if equation \ref{eq:valid} holds for $\sigma, \alpha, \beta$.
Two permutations $\sigma$, $\sigma'$ \emph{agree} on $x_\alpha$ if $T_\alpha \prec_\sigma F_\alpha$ iff $T_\alpha \prec_{\sigma'} F_\alpha$.

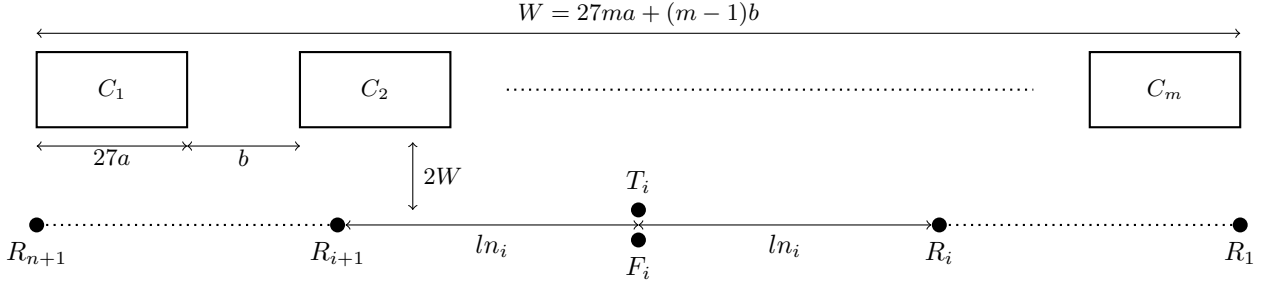
\begin{figure*}
    \centering
    \adjustbox{max width=\textwidth} {
    \begin{tikzpicture}
        \def\W{16}  
        \def\h{-2.3}   
        \def\a{2}   
        \def\b{1.5} 
        \def\m{5}   
        
        \def\shiftX{0} 

        \draw[thick] (\shiftX, 0) rectangle (\shiftX + \a, -1);
        \node at (\shiftX + \a/2, -0.5) {\small $C_1$};

        \draw[thick] (\shiftX + \a + \b, 0) rectangle (\shiftX + \a + \b + \a, -1);
        \node at (\shiftX + \a + \b + \a/2, -0.5) {\small $C_2$};

        \draw[thick, dotted] (\shiftX + \a + \b + \a + \b/2, -0.5) -- (\shiftX + \W - \a - \b/2, -0.5);

        \draw[thick] (\shiftX + \W - \a, 0) rectangle (\shiftX + \W, -1);
        \node at (\shiftX + \W - \a/2, -0.5) {\small $C_m$};

        \draw[<->] (\shiftX, -1.25) -- (\shiftX + \a, -1.25);
        \node at (\shiftX + \a/2, -1.4) {\small $27a$};

        \draw[<->] (\shiftX + \a, -1.25) -- (\shiftX + \a + \b, -1.25);
        \node at (\shiftX + \a + \b/2, -1.4) {\small $b$};

        \draw[<->] (\shiftX, 0.25) -- (\shiftX + \W, 0.25);
        \node at (\shiftX + \W/2, 0.5) {\small $W = 27ma + (m-1)b$};

        \def\xgap{4}  
        \def\ygap{0.2} 
        \def\dotsep{1} 

        \node (Ri+1) at (\shiftX + \xgap,\h) [circle,fill=black,inner sep=2pt,label=below:$R_{i+1}$] {};
        \node (Ti) at (\shiftX + \W/2,\h+\ygap) [circle,fill=black,inner sep=2pt,label=above:$T_i$] {}; 
        \node (Fi) at (\shiftX + \W/2,\h-\ygap) [circle,fill=black,inner sep=2pt,label=below:$F_i$] {}; 
        \node (Ri) at (\shiftX + \W - \xgap,\h) [circle,fill=black,inner sep=2pt,label=below:$R_i$] {};

        \draw[<->] (Ri+1) -- node[below] {$l n_i$} ($(Fi) + (0, \ygap)$);
        \draw[<->] ($(Fi) + (0, \ygap)$) -- node[below] {$l n_i$} (Ri);

        \node (Rn+1) at (\shiftX, \h) [circle,fill=black,inner sep=2pt,label=below:$R_{n+1}$] {};
        \node (R1) at (\shiftX + \W, \h) [circle,fill=black,inner sep=2pt,label=below:$R_1$] {};

        \draw[dotted, thick] (R1) -- (Ri);
        \draw[dotted, thick] (Ri+1) -- (Rn+1);

        \draw[<->] (5, -\ygap-1) -- (5, \h+\ygap);
        \node at (5.4, \h/2-0.5) {\small $2W$};

    \end{tikzpicture}
    }
    \caption{The instance $P_\Pi = S \cup C$. $C$ (top) is the union of $m$ clause gadgets $C_1 \cdots C_m$. $S$ (bottom) contains exactly one point of each color placed along the line $y = -2W - 1$. }
    \label{fig:reduction}
\end{figure*}

\begin{lemma}
\label{lem:gadget}
Let $C_i$ be the clause gadget for $c_i = x_\alpha \vee x_\beta$. For any valid $\sigma$, let $T_i^*$ be the shortest $\sigma$-path that visits all points of $C_i$.
Then,
\begin{equation*}
\|T_i^*\| = 
\begin{cases}
c + 2\smallc & \text{if $\sigma$ satisfies $c_i$}\\
c + 2\sqrt{\smallc^2 + 4} & \text{otherwise}
\end{cases}
\end{equation*}
where $c = 5 + \sqrt{\smallc^2 + 4} + \sqrt{\smallc^2 + 1} + \sqrt{49\smallc^2 + 1} + \sqrt{(18\smallc)^2 + 1}$. Furthermore, there is no $\sigma'$ and $\sigma'$-path shorter than $T_i^*$ visiting $C_i$ such that $\sigma'$ is $\alpha,\beta$-valid and agrees with $\sigma$ on $x_\alpha, x_\beta$.
\end{lemma}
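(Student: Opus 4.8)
The plan is to reduce an arbitrary $\sigma$-path through $C_i$ to a short list of ``canonical'' shapes and then optimise over those by a finite computation.

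\textbf{Setup.} Since $C_i$ contains two points of each of the $k=3n+1$ colors, every $\sigma$-path visiting all $2k$ of them traverses the cyclic color order of $\sigma$ exactly twice; I will speak of the first and second \emph{lap}, each color receiving one of its two copies per lap. When $\sigma$ is $\alpha,\beta$-valid, \eqref{eq:valid} forces its cyclic color order to be the colors of $\seta$ (in some order), then $T_\alpha$ and $F_\alpha$ (in one of the two orders), then $\setb$, then $T_\beta$ and $F_\beta$, then $\setc$; the internal orders of $\seta,\setb,\setc$ are immaterial because those colors sit at coincident points in the gadget, so the only genuine freedom is the order of $T_\alpha$ versus $F_\alpha$ and of $T_\beta$ versus $F_\beta$, i.e.\ the truth values $\sigma$ gives to $x_\alpha$ and $x_\beta$. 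I would begin by tabulating, for each of the nine distinct locations, which colors it carries, and for each color, which two locations hold its two copies.

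\textbf{Upper bound.} This is realised by the four paths of Figure~\ref{fig:gadget}: each is a $\sigma$-path for a $\sigma$ with the indicated behaviour on $x_\alpha,x_\beta$, and summing its edge lengths gives $c+2\smallc$ when at least one of $x_\alpha,x_\beta$ is satisfied (paths (b),(c),(d)) and $c+2\sqrt{\smallc^2+4}$ when neither is (path (a)). Since $\sigma$ satisfies $c_i$ precisely when it satisfies at least one of its literals, this gives $\|T_i^*\|\le c+2\smallc$ in the first case and $\|T_i^*\|\le c+2\sqrt{\smallc^2+4}$ in the second.

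\textbf{Lower bound (the crux).} For the matching lower bound I would normalise an arbitrary $\sigma$-path, exploiting that the points $p_{10}=(x+9\smallc,0)$ and $p_{11}=(x+27\smallc,-1)$ lie far to the right of everything else, which is confined to $[x,x+3\smallc]\times[-1,1]$. Each of the following reductions is a short inequality calibrated by the constants $9$ and $27$: (i) the $|\setc|$ coincident points at $p_{11}$ are visited as a single run, and that run is a terminal segment of the path, since otherwise the path uses two edges incident to $p_{11}$ together with an extra edge out of $p_{10}$, whose total length already exceeds the upper bound established above; by reversing the path we may assume the run is its suffix. (ii) In the lap ending at $p_{11}$, the clump immediately preceding it is $p_{10}$ (both of $T_\beta,F_\beta$), entered directly from a copy of $\setb$ at $(x+2\smallc,\pm1)$; consequently the other copies of $T_\beta,F_\beta$ are the near ones at $(x+2\smallc,\pm1)$, the other copy of $\setc$ is $p_8$, and the second copies of $\seta$, $T_\alpha$, $F_\alpha$ form the clump $p_9$, all visited in that same terminating lap, while the first copies of $\seta,T_\alpha,F_\alpha$ are $p_1,p_2,p_3$ — every alternative again loses $\Omega(\smallc)$ to an extra crossing of the $\approx 7\smallc$ or $\approx 18\smallc$ gap. (iii) The first copies of $\seta$ are visited at $p_1$ rather than $p_9$, since reaching $p_9$ from $p_1$ costs $\Omega(\smallc)$ whereas the normal form enters $p_9$ by the length-$1$ hop $p_8\to p_9$. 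After (i)--(iii), a $\sigma$-path is pinned down by three bits — the order of $T_\alpha$ versus $F_\alpha$, the order of $T_\beta$ versus $F_\beta$, and which of $(x+2\smallc,1),(x+2\smallc,-1)$ holds the first-lap copy of $\setb$ — and its length is an explicit sum of nine segment lengths. Enumerating the (at most) eight cases and minimising over the third bit, using only $\smallc<\sqrt{\smallc^2+1}<\sqrt{\smallc^2+4}<\smallc+2$, yields $c+2\smallc$ whenever at least one of $x_\alpha,x_\beta$ is satisfied and $c+2\sqrt{\smallc^2+4}$ otherwise, matching the upper bound. The delicate part, and the bulk of the writing, is (i)--(iii): systematically ruling out routings that revisit a far point, defer the run at $p_{11}$, or split a clump of coincident points.

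\textbf{The ``furthermore''.} The lower-bound argument used only that $\sigma$ is $\alpha,\beta$-valid (to fix the block structure) and the truth values it assigns to $x_\alpha,x_\beta$ (to fix the two free orders); the internal orders of $\seta,\setb,\setc$ never appeared, since those points coincide. Hence every $\sigma'$-path through $C_i$, where $\sigma'$ is $\alpha,\beta$-valid and agrees with $\sigma$ on $x_\alpha$ and $x_\beta$, has length at least $\|T_i^*\|$, which is exactly the furthermore claim.
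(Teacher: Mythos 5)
Your overall strategy matches the paper's: the upper bound comes from the four candidate paths of Figure~\ref{fig:gadget}, with the coincident $T_j/F_j$ points for $j\neq\alpha,\beta$ absorbing all permutations that agree on $x_\alpha,x_\beta$ (which is also how the ``furthermore'' is obtained), and the lower bound pins down the global shape of any competing path and reduces the remaining freedom to whether $p_4$ or $p_5$ is visited first, settled by a finite comparison (your inequality $\sqrt{\smallc^2+4}<\smallc+2$ is exactly the paper's ``swapping $p_4$ and $p_5$ increases the weight''). The paper's quantitative engine for your steps (i)--(iii) is a projection onto the $x$-axis: the four candidate paths project to length exactly $29\smallc$ (hence have total length below $30\smallc$ once $\smallc>14$), while any path that revisits the far region, defers the run at $p_{11}$, or backtracks horizontally by $2\smallc$ or more projects to length at least $30\smallc$. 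Your ``local inequalities calibrated by $9$ and $27$'' would have to reproduce precisely this, and you correctly flag (i)--(iii) as the bulk of the work, so that part of your plan is a sound, if not yet executed, match for the paper.

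One step as stated is wrong, however: in (i) you dispose of the case in which the run at $p_{11}$ is a \emph{prefix} ``by reversing the path.'' Reversal does not preserve the hypothesis class: the reverse of a $\sigma'$-path follows the reversed cyclic color order, which is not $\alpha,\beta$-valid (it traverses $\setc$, then $\{T_\beta,F_\beta\}$, then $\setb$, then $\{T_\alpha,F_\alpha\}$, then $\seta$), so your deductions in (ii)---that the clump immediately preceding the terminal $p_{11}$ run is $p_{10}$, entered from a copy of $\setb$---do not apply to the reversed path, and your eight-case enumeration would silently miss this configuration. The prefix case is not symmetric to the suffix case and must be excluded directly: by \eqref{eq:valid}, after the $\setc$ block at $p_{11}$ the path must next visit a point of $\seta$, located at $p_1$ or $p_9$, a leg of length at least $24\smallc$, and it must still reach $p_{10}$ at some later time, contributing at least $6\smallc$ more, for a total of at least $30\smallc$, exceeding every candidate length. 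This is exactly how the paper handles that case; with this substitution (and the deferred exclusion arguments of (i)--(iii) actually written out, e.g.\ via the projection bound), your plan goes through.
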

\begin{proof}
We first give a constructive proof of the existence of a path with the desired cost.
Of course, we cannot hope to give a path for each of the $2^n$ possible valid orderings, so we will need to argue that one can be easily constructed for any $\sigma$ from one of a few candidate paths.
We give four such candidate paths, one for each pair of choices $T_\alpha \prec_\sigma F_\alpha$ or $F_\alpha \prec_\sigma T_\alpha$ and $T_\beta \prec_\sigma F_\beta$ or $F_\beta \prec_\sigma T_\beta$; see Figure \ref{fig:gadget}.
The paths are depicted such that upon visiting a set of points at any location $p_i$, each path visits all points of $c_i$ located there in the order specified by $\sigma$. 
(We specifically refer to $p_1, \cdots, p_{11}$ as \emph{locations}, reserving the term \emph{point} for $q \in C_i$ to avoid confusion.)
Path $(a)$ is the unique path for which $\sigma$ does not satisfy the clause $c_i$, namely, $F_\alpha \prec_\sigma T_\alpha$ and $F_\beta \prec_\sigma T_\beta$.
We can manually verify that this path is strictly longer than the other three, each of which have the same desired length.
The purpose of the four paths is that for any $\sigma'$, $\sigma'$ agrees with $\sigma$ on both $x_\alpha$ and $x_\beta$ for one of the four candidate $\sigma$.
This makes it easy to convert the $\sigma$-path to a $\sigma'$-path with the same cost.
For $j \neq \alpha, \beta$, each point of color $T_j$ in $C_i$ is coincident to a point of color $F_j$ and vice versa.
Since each path of our four candidate $\sigma$-paths visits these corresponding points one immediately after the other, then for each variable $x_j$ in which $\sigma$ and $\sigma'$ do not agree, upon visiting a point of $T_j$ (resp. $F_j$), the path can instead first visit the corresponding point of color $F_j$ (resp. $T_j$).
Applying this swap for all appropriate $j$ converts a $\sigma$-path to a $\sigma'$-path without changing its weight.

We now prove the nonexistence of a shorter $\sigma'$-path for any $\sigma'$ that is $\alpha,\beta$-valid.
First, observe that if we project each of the four tours onto the $x$-axis, then the projected tours each have length exactly $29a$.
Assuming that constant $a$ is chosen so that the total vertical deviation amounts to less than $a$ (e.g. $a > 14$), then each of the original tours have length less than $30a$.
For (almost) any other $\sigma'$-path, we will show that its length after projecting onto the $x$-axis remains at least $30a$.

Specifically, let $T^{\sigma'} = (q_1, \cdots, q_{6m+2})$ be a $\sigma'$-path visiting $C_i$ for some $\alpha,\beta$-valid $\sigma'$ (and thus $C_i = \{q_1, \cdots, q_{6m+2}\}$ an appropriate labeling of the points of the gadget).
By definition, $\sigma'$ satisfies equation \ref{eq:valid}, namely,
\[\seta \prec_{\sigma'} \{T_\alpha, F_\beta\}, \prec_{\sigma'} \setb \prec_{\sigma'} \{T_\beta, F_\beta\} \prec_{\sigma'} \setc.\]
We will also use the fact that the gadget contains exactly two points of each color, which limits the possible path structures.
Specifically, we fix a bijection of the points $M: C_i \rightarrow C_i$, where $M(q_j)$ is the unique point of $C_i$ with the same color as $q_j$.
For any $q_j$, a path must visit a point of every other color between visiting the points $q_j$ and $M(q_j)$.
Otherwise, the path is not a $\sigma'$-path for any $\sigma'$.

We start by showing that $T^{\sigma'}$ last visits all of the points at location $p_{11}$.
For sake of contradiction, assume otherwise, and that $||T^{\sigma'}|| < 30a$.
Then $\pi$ goes to and from $q_{11}$ implying $||\pi|| > 36a$, or else $\pi$ first visits $p_{11}$ and never returns.
In the latter case, after visiting all points of $C$ at $p_{11}$, $T^{\sigma'}$ must visit a point in $A$ by equation \ref{eq:valid}, either at location $p_1$ or $p_9$.
In either case, this leg has length at least $24a$, and eventually $T^{\sigma'}$ goes to $p_{10}$ with a leg of length at least $6a$, which sums to $30a$, which contradicts our assumption. So, $T^{\sigma'}$ last visits all of the points at location $p_{11}$.

This implies that $T^{\sigma'}$ starts at a point of $A$ (by equation \ref{eq:valid}), again either at $p_1$ or $p_9$.
If $T^{\sigma'}$ starts at $p_9$ (or visits a point of $p_9$ before visiting all points at $p_1$), it later visits $p_1$, and eventually $p_{11}$. 
The leg from $p_9$ to $p_1$ has length least $3a$ and the leg from $p_1$ to $p_{11}$ has length at least $27a$, so the total is again at least $30a$.
Therefore, $T^{\sigma'}$ first visits all points at location $p_1$, and last visits all points at location $p_{11}$.

Now, notice that each of the tours of Figure \ref{fig:gadget} travel almost monotonically in the horizontal ($x$) dimension from $p_1$ to $p_{11}$, only backtracking a distance $a$ from $p_9$ to either $p_4$ or $p_5$. 
On the other hand, any path that backtracks at least a distance $2a$ while traveling from $p_1$ to $p_{11}$ has a total length at least $30a$.
Notice that locations $p_4$ and $p_5$ share an $x$-coordinate, and collectively coincide with all points of color $\setb$.
Thus, any polychromatic path must first visit a point in one of them, then visit points of all remaining colors, and then return to the other one.
To do so necessarily means that the path backtracks a distance $a$, so this is the only allowable backtracking to remain under cost $30a$.
We use this observation to eliminate almost all of the remaining paths.
In particular, every pair of points $q_j$, $M(q_j)$ of matching color (excluding those located at $p_4$ or $p_5$) are separated by at least a horizontal distance $a$, so a path of length $< 30a$ must visit the one with lesser $x$-coordinate first.

This leaves open only the possibility of a path which differs from one of the paths given in Figure \ref{fig:gadget} in its choice of whether to first visit $p_4$ or $p_5$.
For each of the paths of Figure \ref{fig:gadget}, it is easy to verify that swapping the positions of $p_4$ and $p_5$ in the path increases its weight.
\end{proof}

\subsection{Full construction}

We now describe the full construction, depicted in Figure \ref{fig:reduction}. Given a Max 2-SAT instance $\Pi$ = ($\{x_1, \cdots, x_n\}$, $\{c_1, \cdots, c_m\}$) we create a set of points $P_\Pi = S \cup C$. $S$ will be a special set containing exactly one point from each color class. $C$ is a collection of clause gadgets $C_1 \cup \cdots \cup C_m$. 
For each clause $c_i$, $C_i$ is anchored at $(i-1) (9\smallc + \mediumc)$.
Fix $W$ as the maximum $x$-coordinate of a point in $C$, i.e. $W = 27m\smallc + (m-1)\mediumc$.

We place the points of $S$ along the horizontal line $y = -(2W + 1)$ from East to West following a valid permutation. 
Points of the colors $T_i, F_i$ are coincident for each $i$, so that \emph{any} valid tour may visit $S$ via a straight line, while any invalid tour will have to backtrack.
The spacing around the points is proportional to the number of clauses that $x_i$ appears in.
The idea behind this spacing is to only penalize invalid tours relative to the number of clause gadgets they could ``cheat'' in (precisely those for which they are not $\alpha,\beta$-valid), since we cannot make the total spacing too large while preserving approximate solutions.
Specifically, for each variable $x_i$, let $n_i$ be the number of clauses that it or its negation appears in. Clearly $\sum_{i=1}^n n_i = 2m$, since each clause contains exactly two literals. Let $N_i = \sum_{j=1}^{i} n_j$. 
$S$ contains the following points for $(1 \leq i \leq n)$, where $\largec = \frac{27}{4} \smallc + (\frac{1}{4} - \frac{1}{m})\mediumc$:
\begin{itemize}
\item Point $r_{i}$ of color $R_i$ at ($W - 2\largec N_{i-1}$, $-2W - 1$)
\item Points $t_{i}, f_i$ of colors $T_i, F_i$ resp. both located at ($W - 2\largec N_{i-1} - \largec n_i$, $-2W - 1$)
\end{itemize}
Finally, $S$ contains a point $r_{n+1}$ of color $R_{n+1}$ at ($0$, $-2W - 1$) = ($W - 2\largec N_{n}$, $-2W - 1$).
We can now state the length of the optimal $\sigma$-tour for any valid $\sigma$.

\begin{lemma}
\label{lem:opttour}
Let $\sigma$ be a valid permutation satisfying $k$ clauses of $\Pi$.
Then there is a $\sigma$-tour $T^*$ of $P_\pi = S \cup C$ s.t.
\begin{align*}
    \|T^*\| =\ &5W + (m-1)\mediumc + k(c + 2a)\\
              &+ (m-k)(c + 2\sqrt{\smallc^2 + 4})
\end{align*}
where $c = f(a)$ is the constant defined in lemma \ref{lem:gadget}.
\end{lemma}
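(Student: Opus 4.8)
The plan is to exhibit an explicit $\sigma$-tour of $P_\Pi = S \cup C$ whose length equals the claimed value, and then observe that the construction respects the cyclic color order prescribed by $\sigma$. Since the tour visits every point exactly once and must be a $\sigma$-cycle, I would build it in three conceptual pieces: (1) a single left-to-right pass through the clause gadgets $C_1, \dots, C_m$ in order, (2) a long ``descent'' from the right end of $C_m$ down to the point set $S$ on the line $y = -2W-1$, and (3) a right-to-left straight-line pass through $S$ (the ``return trip'') back up to the left end of $C_1$, closing the cycle. The key structural fact to invoke is that $S$ contains exactly one point of each color, placed in an order consistent with a valid $\sigma$ and with $T_i, F_i$ coincident, so the entire pass through $S$ can be traversed as a straight horizontal segment of length equal to the horizontal span of $S$, which by the placement formulas is $2\largec N_n = 2\largec \cdot 2m = $ (plug in $\largec = \tfrac{27}{4}\smallc + (\tfrac14 - \tfrac1m)\mediumc$) exactly $27m\smallc + (m-1)\mediumc - \text{(something)}$; I would verify this bookkeeping gives total span $W$ more precisely: $S$ runs from $x = W$ down to $x = 0$, so its length is exactly $W$.

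Next I would account for the lengths. Within each gadget $C_i$, Lemma~\ref{lem:gadget} gives the shortest $\sigma$-path visiting all of $C_i$: it is $c + 2\smallc$ if $\sigma$ satisfies $c_i$ and $c + 2\sqrt{\smallc^2+4}$ otherwise; summing over the $k$ satisfied and $m-k$ unsatisfied clauses yields the $k(c+2\smallc) + (m-k)(c + 2\sqrt{\smallc^2+4})$ term. The within-gadget paths of Lemma~\ref{lem:gadget} begin at a point of $\seta$ (location $p_1$) and end at a point of $\setc$ (location $p_{11}$); I would check these endpoints are positioned so that consecutive gadgets can be linked: the $p_{11}$ of $C_i$ at $x$-offset $27\smallc$ within the gadget, and the $p_1$ of $C_{i+1}$ at the anchor of $C_{i+1}$, which is $9\smallc + \mediumc$ farther along — here one must be slightly careful, since the gadget's rightmost feature at $x + 27\smallc$ is well past where the next gadget is anchored, so the ``links'' between gadgets and the $p_9$-to-$p_{10}$ back-and-forth inside each gadget must be tallied. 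The cleanest accounting is probably: the union of (the $m$ gadget-internal optimal paths) plus (the $m-1$ inter-gadget connector edges) has total horizontal extent exactly $W$, and the vertical contributions are already baked into the constant $c$ and the $\pm2\smallc$ / $\pm2\sqrt{\smallc^2+4}$ terms. Then the descent from $C_m$ to $S$ and the return pass contribute roughly $2W + W = 3W$ more; combined with the $W$ already counted for the forward pass and the $(m-1)\mediumc$ gap terms, this should total $5W + (m-1)\mediumc$ plus the gadget terms. I would pin down the exact constant by noting the vertical drop to $S$ is $2W+1$ (done twice, once down and once back up, but the ``up'' is folded into closing the cycle), which is why the coefficient of $W$ lands on $5$ rather than $4$ or $6$.

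The main obstacle I anticipate is the precise bookkeeping that makes all the $\largec$, $\smallc$, $\mediumc$ and $W$ terms collapse to exactly $5W + (m-1)\mediumc$ — in particular verifying that (i) the horizontal span of $S$ is exactly $W$ given the telescoping definition of $N_i$ and the value of $\largec$, (ii) the forward pass through all gadgets plus connectors has horizontal length exactly $W$ (equivalently $27m\smallc + (m-1)\mediumc$, using that each gadget spans $27\smallc$ horizontally and consecutive anchors are $9\smallc+\mediumc$ apart — so the gadgets overlap in $x$, which I must handle by letting the tour's $x$-coordinate be genuinely monotone and equal to the full span $W$), and (iii) the two vertical traversals between the $y\in\{-1,0,1\}$ band and the line $y=-2W-1$ sum to $4W$ (up to lower-order terms absorbed into $c$ and the $\pm\smallc$ corrections). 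Everything else — that the resulting cycle is genuinely a $\sigma$-cycle (each location holds one point per relevant color, visited in $\sigma$-order; $S$ holds exactly one per color) and that such a $\sigma$-tour exists for \emph{every} valid $\sigma$ (by the four-candidate-path argument of Lemma~\ref{lem:gadget}, swapping coincident $T_j/F_j$ pairs) — follows directly from the already-established machinery, so I would state those briefly and devote the bulk of the proof to the arithmetic verification.
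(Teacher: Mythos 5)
Your construction is the same as the paper's: the tour consists of the straight-line $\sigma$-path through $S$ (length $W$), the $m$ optimal gadget paths of Lemma~\ref{lem:gadget} (contributing $k(c+2\smallc)+(m-k)(c+2\sqrt{\smallc^2+4})$), the $m-1$ connector edges between consecutive gadgets (total $(m-1)\mediumc$), and two essentially vertical legs of length $2W$ each joining the last point of $S$ to the first gadget point and the last gadget point back to $S$ --- which also resolves the bookkeeping you were unsure about: the coefficient $5$ is exactly $W + 2W + 2W$, and the forward pass through the gadgets contributes no additional $W$ because its length is already fully accounted for by the gadget paths and connectors. Since this matches the paper's proof in both decomposition and accounting, your approach is correct as proposed.
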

\begin{proof}
We construct a tour $T^*$ that first visits the points of $S$ via a straight line and then visits each of the clauses in ascending order $C_1, \cdots, C_m$, using the $\sigma$-path $T^*_i$ of Lemma \ref{lem:gadget} to visit each $C_i$.
First, write $S = \{s_1, \cdots, s_{3n+1}\}$ s.t. $s_1 \prec_\sigma \cdots \prec_\sigma s_{3n+1}$, and let $T_S = (s_1, \cdots, s_{3n+1})$ be the unique $\sigma$-path visiting $S$.
By construction, $||T_S|| = W$ for any valid $\sigma$.
Now, let $u_i, v_i$ be the first and last points of $T_i^*$, respectively.
Then $T^*$ consists of the edges in $T_S$, the edges in $T_1^*, \cdots, T_m^*$, and the additional edges $(s_{3n+1}, u_1), (v_m, s_{1}),$ and $(v_i, u_{i+1})$ for all $1 \leq i < m$.
By Lemma \ref{fig:gadget}, $\sum_{i \in [1,m]}T_i^* = k(c + 2\smallc) + (m-k)(c + 2\sqrt{\smallc^2 + 4})$.
Furthermore, $||s_{3n+1} - u_1|| = ||v_m, s_{1}||=2W$, and $\sum_{i=1}^m ||v_i - u_{i+1}|| = (m-1)\mediumc$.
Summing yields the desired total length.
\end{proof}

We now argue that the existence of a $\sigma$-cycle with length close to that of $T^*$ in Lemma \ref{lem:opttour} implies that $\sigma$ satisfies close to $k$ clauses of $\Pi$. 
To do so, we will need to break apart an arbitrary tour for a finer-grained analysis. 
With this goal in mind, let $T^\sigma$ be a $\sigma$-tour (for possibly invalid $\sigma$), and let $T^\sigma[S]$ and $T^\sigma[C]$ be the portions of $T^\sigma$ induced on $S$ and $C$ respectively. 
It is worth pointing out that, in general, $T^\sigma[S]$ and $T^\sigma[C]$ may not be connected. However, we have chosen a suitable separation of distance $2W$ between $S$ and $C$ so that any tour that alternates between $S$ and $C$ multiples times has length that well exceeds that of $T^*$. Thus, we can safely ignore these tours, and assume that both $T^\sigma[S]$ and $T^\sigma[C]$ are connected.

Now, let us further decompose $T^\sigma[C]$: 
we will use $T^\sigma_i$ to denote the portion of $T^\sigma[C]$ spent inside the clause gadget $C_i$.
More specifically, consider the set of vertical lines $L$ given by the equations $x = (27\smallc + \mediumc)(i-1)$ and $x = (27\smallc + \mediumc)(i-1) + 27\smallc$ for $i \in [1, m]$, and let us subdivide each leg of $T^\sigma[C]$ at each crossing of a line in $L$.
After subdivision, let $T^\sigma_i$ be the set of line segments of $T^\sigma[C]$ contained (inclusively) between the vertical lines $x = (27\smallc + \mediumc)(i-1)$ and $x = (27\smallc + \mediumc)(i-1) + 27\smallc$. ($T^\sigma_i$ may not be connected, and may include partial legs of $T^\sigma[C]$.)
Collectively, the $T^\sigma_i$'s do not account for the remaining portion of $T^\sigma[C]$ between gadgets, which has length at least $(m-1)b$.
We get the following bound:

\begin{obs}
\label{obs:tourlb}
$\|T^\sigma\| > \sum_{i=1}^m \|T^\sigma_i\| + 5W + (m-1)b$.
\end{obs}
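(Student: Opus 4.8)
The plan is to establish the lower bound in Observation~\ref{obs:tourlb} by accounting separately for the three disjoint contributions to $\|T^\sigma\|$: the portion inside the clause gadgets (captured by $\sum_{i=1}^m \|T^\sigma_i\|$), the portion of $T^\sigma[S]$, and the portion of $T^\sigma[C]$ lying strictly between consecutive gadgets. Since we have already argued (in the paragraph preceding the observation) that we may assume both $T^\sigma[S]$ and $T^\sigma[C]$ are connected, the tour $T^\sigma$ decomposes into one connected arc through $S$, one connected arc through $C$, and exactly two ``bridge'' edges joining the two arcs. Every point of $S$ lies on the line $y = -2W-1$, and every point of $C$ lies in the horizontal strip $-1 \le y \le 1$; hence each of the two bridge edges has Euclidean length at least $2W$, contributing at least $4W$ in total. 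This is the source of the $4W$ part of the claimed $5W$.

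Next I would bound $\|T^\sigma[S]\|$ from below. The points of $S$ all lie on a single horizontal line, spanning $x$-coordinates from $0$ to $W$; any connected path visiting all of them has length at least $W$ (the horizontal extent), since the projection onto the $x$-axis of a connected path covering points with $x$-coordinates $0$ and $W$ has length at least $W$. Combined with the $4W$ from the two bridges, this yields the $5W$ term. Then I would bound the between-gadget portion of $T^\sigma[C]$: by the subdivision using the vertical lines in $L$, the segments of $T^\sigma[C]$ that are \emph{not} part of any $T^\sigma_i$ lie in the $m-1$ vertical strips of width $b$ separating consecutive gadgets (the strips $(27a+b)(i-1)+27a \le x \le (27a+b)i$); since $T^\sigma[C]$ is connected and must reach points on both sides of each such strip — gadget $C_i$ has a point at $x=(27a+b)(i-1)+27a$ (its $\setc$ at $(x+27a,-1)$ in local coordinates) and gadget $C_{i+1}$ has a point at $x=(27a+b)i$ (its $\seta$ at local $x$) — the total $x$-extent crossed in these strips is at least $(m-1)b$, so the corresponding arc length is at least $(m-1)b$. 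Summing the three disjoint contributions gives $\sum_{i=1}^m \|T^\sigma_i\| + 4W + W + (m-1)b = \sum_{i=1}^m \|T^\sigma_i\| + 5W + (m-1)b$, and the inequality is strict because the bridge edges also have nonzero vertical extent (or because $T^\sigma[S]$ must actually traverse each segment of $S$'s spacing), giving the strict ``$>$''.

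The main obstacle is making the decomposition genuinely disjoint and airtight: I must be careful that the segments counted in $\sum \|T^\sigma_i\|$, the segments counted in the $(m-1)b$ between-gadget term, the segments of $T^\sigma[S]$, and the two bridge edges are pairwise edge-disjoint (after the subdivision at the lines in $L$), so that their lengths genuinely add up to a lower bound on $\|T^\sigma\|$ rather than double-counting. The subdivision at the vertical lines of $L$ is exactly what makes this clean — each maximal segment of $T^\sigma[C]$ between consecutive subdivision points lies entirely in one gadget strip or one between-gadget strip — but I should state explicitly that the lines in $L$ partition the $x$-axis region occupied by $C$ into the $m$ gadget strips and $m-1$ gap strips, that each $T^\sigma_i$ consists of segments in the $i$-th gadget strip, and that the between-gadget segments form a sub-collection whose $x$-projection covers each gap interval of length $b$. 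One subtlety to handle carefully: $T^\sigma[C]$ being connected is an assumption we granted ourselves, but I should confirm it suffices here — if it alternates with $S$, the extra $2W$-length bridges dominate $\|T^*\|$, so such tours are irrelevant. Given that, the argument is essentially a projection/extent bound and should go through without heavy computation.
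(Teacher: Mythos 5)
Your decomposition is exactly the one the paper intends: the observation is stated without an explicit proof, and the surrounding text justifies it by the same accounting you give --- $\|T^\sigma[S]\| \ge W$ from the horizontal extent of $S$, two bridge edges of length $\ge 2W$ each from the vertical separation between the line $y=-(2W+1)$ and the strip $|y|\le 1$ containing $C$, at least $(m-1)\mediumc$ from the $m-1$ gap strips of width $\mediumc$ that the connected arc $T^\sigma[C]$ must cross, and the $T^\sigma_i$'s accounting for the rest, all made edge-disjoint by the subdivision at the lines in $L$. The one weak point is your justification of the strict inequality: the bridges can be exactly vertical of length exactly $2W$ and $T^\sigma[S]$ can have length exactly $W$ (indeed the paper's own proof of Lemma~\ref{lem:invalid} writes $\|T^*\| = 5W + (m-1)\mediumc + \sum_{i=1}^m \|T^*_i\|$ with equality), so only ``$\ge$'' is actually provable here; this weaker form still suffices for every downstream use of the observation.
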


If $\sigma$ is valid, we can use Lemma \ref{lem:gadget} to bound the number of clauses $\sigma$ satisfies based on the length of an approximate $\sigma$-tour.
Lemma \ref{lem:invalid} will account for invalid $\sigma$.

\begin{lemma}
\label{lem:solutionmap}
Let $T^\sigma$ be a $\sigma$-tour of $P_\Pi$ for valid $\sigma$. 
Let $c = f(a)$ be the constant given in Lemma \ref{lem:gadget}. 
If
    \[\|T^\sigma\| \leq 5W + (m-1)\mediumc + k(c + 2a) + (m-k)(c + 2\sqrt{\smallc^2 + 4}),\]
then $\sigma$ satisfies at least $k$ clauses of $\Pi$.
\end{lemma}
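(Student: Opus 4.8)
The right-hand side of the hypothesized inequality is exactly the length $\|T^*\|$ produced by Lemma~\ref{lem:opttour} for a $\sigma$ satisfying $k$ clauses, so the statement says: no valid $\sigma$-tour can match or beat this ``target'' length unless $\sigma$ satisfies at least $k$ clauses. I would argue in four steps. (i) Reduce to the case that $T^\sigma$ crosses between $S$ and $C$ exactly twice, so $T^\sigma[S]$ and $T^\sigma[C]$ are connected. (ii) Use Observation~\ref{obs:tourlb} to transfer the length bound onto $\sum_{i=1}^m\|T^\sigma_i\|$. (iii) Prove the per-gadget lower bound $\|T^\sigma_i\|\ge\|T_i^*\|$ using Lemma~\ref{lem:gadget}. (iv) Sum over $i$ and compare coefficients.

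\textbf{Steps (i)--(ii).} Every point of $S$ has $y=-(2W+1)$ while every point of $C$ has $|y|\le 1$, so each $S$--$C$ crossing of $T^\sigma$ has length at least $2W$; the number of crossings is even and at least two, and four or more would force $\|T^\sigma\|\ge 8W$, which exceeds the target (the extra $3W$ over the target's $5W$ dominates its remaining $(m-1)\mediumc+\Theta(m\smallc)$ terms). Hence there are exactly two crossings, so $T^\sigma[S]$ and $T^\sigma[C]$ are each connected and $T^\sigma[C]$ is a $\sigma$-path visiting all of $C$. Observation~\ref{obs:tourlb} then gives
\[
  \sum_{i=1}^m \|T^\sigma_i\|\ \le\ \|T^\sigma\|-5W-(m-1)\mediumc\ \le\ k(c+2\smallc)+(m-k)\bigl(c+2\sqrt{\smallc^2+4}\bigr).
\]

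\textbf{Step (iii): the crux.} Write $\mathcal{R}_i$ for the width-$27\smallc$ vertical strip bounded by the two lines of $L$ belonging to gadget $i$, so $T^\sigma_i$ is the part of $T^\sigma[C]$ inside $\mathcal{R}_i$; since all $2k$ points of $C_i$ lie in $\mathcal{R}_i$, $T^\sigma_i$ meets every point of $C_i$. If $T^\sigma_i$ is a single sub-arc of $T^\sigma[C]$, it is a $\sigma$-path visiting exactly $C_i$ and $\|T^\sigma_i\|\ge\|T_i^*\|$ is immediate from the definition of $T_i^*$. The obstacle is that $T^\sigma$ may interleave the gadgets, so in general $T^\sigma_i$ is a disjoint union of sub-arcs together with partial legs crossing $\partial\mathcal{R}_i$. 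The plan is to \emph{un-interleave}: by an exchange argument --- consecutive gadgets are $\mediumc\gg\smallc$ apart, so entering a strip $\mathcal{R}_i$ several times, interleaved with visits to other strips, is strictly more costly than doing it in one pass --- one rearranges $T^\sigma$, without increasing its length, so that each strip $\mathcal{R}_i$ is traversed by a single contiguous sub-path. Each resulting $T^\sigma_i$ then visits exactly the points of $C_i$; and because the colors $T_\alpha,F_\alpha,T_\beta,F_\beta$ occur only inside gadget $i$ while every other color occurs in $\seta\cup\setb\cup\setc$, this sub-path is a $\sigma'$-path for some $\sigma'$ that is $\alpha,\beta$-valid and agrees with $\sigma$ on $x_\alpha$ and $x_\beta$. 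The ``Furthermore'' clause of Lemma~\ref{lem:gadget} --- whose gadget cost depends only on the truth values $\sigma$ assigns to $x_\alpha,x_\beta$ --- then gives $\|T^\sigma_i\|\ge\|T_i^*\|$.

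\textbf{Step (iv) and main obstacle.} Let $k'$ be the number of clauses $\sigma$ satisfies. By Lemma~\ref{lem:gadget}, $\sum_{i=1}^m\|T_i^*\|=k'(c+2\smallc)+(m-k')\bigl(c+2\sqrt{\smallc^2+4}\bigr)$, so Steps (ii)--(iii) give $k'(c+2\smallc)+(m-k')\bigl(c+2\sqrt{\smallc^2+4}\bigr)\le k(c+2\smallc)+(m-k)\bigl(c+2\sqrt{\smallc^2+4}\bigr)$. Cancelling common terms and setting $\delta=2\sqrt{\smallc^2+4}-2\smallc>0$, this reduces to $-\delta k'\le-\delta k$, i.e.\ $k'\ge k$, as claimed. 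Everything except Step (iii) is routine bookkeeping; the real work --- and the place to be most careful --- is the un-interleaving in Step (iii): turning the possibly disconnected bundle $T^\sigma_i$ into a bona fide $\sigma'$-path through $C_i$, with $\sigma'$ still $\alpha,\beta$-valid and agreeing with $\sigma$ on the two clause variables, without increasing length, which is precisely the hypothesis under which Lemma~\ref{lem:gadget} can be invoked gadget by gadget.
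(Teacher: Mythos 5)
Your overall route matches the paper's: reduce to the case where $T^\sigma[S]$ and $T^\sigma[C]$ are connected, apply Observation~\ref{obs:tourlb} to pass to $\sum_i\|T^\sigma_i\|$, lower-bound each gadget's contribution via Lemma~\ref{lem:gadget}, and compare coefficients of $2\sqrt{\smallc^2+4}-2\smallc$ (the paper phrases the last step as a contradiction, but it is the same bookkeeping). The problem is exactly where you yourself locate ``the real work'': your Step (iii) is not a proof but a plan, and the plan as stated does not go through easily. You propose to ``un-interleave'' the tour so that each strip is traversed by one contiguous sub-path, ``without increasing its length,'' by an exchange argument based on $\mediumc\gg\smallc$. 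But a $\sigma$-tour is constrained by a \emph{global} cyclic color order, and every color class has points in every gadget (the sets $\seta,\setb,\setc$ contain all colors other than $T_\alpha,F_\alpha,T_\beta,F_\beta$, and even those appear at the interior locations $p_9,p_{10}$). Reordering which copy of a color is visited when is therefore not a local swap between two strips: splicing the pieces of $T^\sigma_i$ together changes the color sequence elsewhere in the tour, and you give no argument that the result is still a polychromatic tour at all, let alone one that is no longer. Without that, you cannot invoke the ``Furthermore'' clause of Lemma~\ref{lem:gadget} gadget by gadget, and the central inequality $\|T^\sigma_i\|\ge\|T^*_i\|$ is unestablished precisely in the interleaved case.

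The paper avoids rearranging the tour altogether. It keeps the (possibly disconnected) bundles $T^\sigma_i$ and uses an amortized charging argument: whenever $T^\sigma_i$ is disconnected, the tour must exit and re-enter the gadget strip, incurring additional inter-gadget length of order $\mediumc$ that is not counted in $\sum_i\|T^\sigma_i\|$ nor in the $(m-1)\mediumc$ baseline of Observation~\ref{obs:tourlb}; this overhead is charged to $T^\sigma_i$. Since the most a tour could possibly ``save'' inside one gadget relative to $\|T^*_i\|$ is bounded by roughly $30\smallc$, choosing $\mediumc>30\smallc$ makes the per-gadget lower bound hold after charging, and the summation in your Step (iv) then goes through unchanged. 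So the fix is not to perfect the exchange argument but to replace it: either adopt the paper's charging step, or supply a genuine proof that interleaving can be removed without lengthening the tour --- which, given the global color constraint, is substantially harder than the sentence in your Step (iii) suggests.
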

\begin{proof}
For the sake of contradiction, let us assume that $\sigma$ satisfies $k'$ clauses for some $k' < k$.
If every $T^\sigma_i$ is connected, then by Lemma \ref{lem:gadget},
\[\sum_{i=1}^m \|T^\sigma_i\| > k'(c + 2\smallc) + (m-k')(c + 2\sqrt{\smallc^2 + 4}).\]
For each disconnected $T^\sigma_i$, $T^\sigma$ incurs an additional length $b$ to enter and leave the gadget $C_i$.
We charge this overhead to $T^\sigma_i$, such that the above bound always holds as long as $b >> a$, e.g. $b > 30a$.
However, from Observation \ref{obs:tourlb} and the assumption of the lemma, we have that 
\[\sum_{i=1}^m \|T^\sigma_i\| < 2 + k(c + 2a) + (m-k)(c + 2\sqrt{\smallc^2 + 4}),\]
which implies that $(k -k')(2\sqrt{a^2+4} -2a) < 0$, a contradiction since $k - k' > 0$.
\end{proof}

We lastly rule out the possibility of a $\sigma$-tour for invalid $\sigma$.
Essentially, we are able to show that any gains made by ``cheating'' in the clause gadgets are offset by an equal increase in overhead to visit $S$.

\begin{lemma}
\label{lem:invalid}
Let $T^\sigma$ be a $\sigma$-tour of $P_\Pi$. There exists valid $\sigma'$ and $\sigma'$-tour $T^{\sigma'}$ such that $\|T^{\sigma'}\| \leq \|T^{\sigma}\|$. Furthermore, $\sigma'$ can be computed in polynomial time given $\sigma$.
\end{lemma}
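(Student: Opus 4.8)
The plan is to replace the (possibly invalid) $\sigma$ by the valid permutation $\sigma'$ that ``agrees with $\sigma$ on every variable,'' and to pay for whatever $T^\sigma$ might save inside the clause gadgets where $\sigma$ fails $\alpha,\beta$-validity out of a provably larger detour that this failure forces on the traversal of $S$.

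\emph{Normalization.} As in the discussion preceding Observation~\ref{obs:tourlb}, I would first assume $T^\sigma[S]$ and $T^\sigma[C]$ are each connected, since otherwise $T^\sigma$ crosses the strip of width $\approx 2W$ separating $S$ from $C$ more than twice and already exceeds every valid tour. Writing $\ell_S=\|T^\sigma[S]\|$, the tour then decomposes into the collinear Hamiltonian path $T^\sigma[S]$ (so $\ell_S\ge W$), two bridge edges of length $\ge 2W$ each, the between-gadget portion (of length $\ge(m-1)\mediumc$ by projection onto the $x$-axis), and the gadget portions $T^\sigma_1,\dots,T^\sigma_m$; hence $\|T^\sigma\|\ge \ell_S+4W+(m-1)\mediumc+\sum_{i}\|T^\sigma_i\|$. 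If the cyclic colour order of $\sigma$ is the valid template $R_1,\{T_1,F_1\},R_2,\dots,R_{n+1}$ up to rotation and/or reversal, then $\sigma'$ is obtained by rotating (and if necessary reversing) $\sigma$ into canonical form in polynomial time, and $T^\sigma$ (traversed backwards if $\sigma$ was a reversal) is a $\sigma'$-tour of the same length, so the lemma holds; otherwise $T^\sigma[S]$ is not $x$-monotone, $\ell_S>W$, and we continue.

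\emph{The replacement tour and the reduction.} Let $\sigma'$ be the valid permutation whose induced order is the template with each pair $\{T_\alpha,F_\alpha\}$ oriented as in $\sigma$; it is polynomial-time computable and agrees with $\sigma$ on the truth value of every literal, hence on which clauses are satisfied. Lemma~\ref{lem:opttour} applied to $\sigma'$ produces a $\sigma'$-tour $T^{\sigma'}$ of length $5W+(m-1)\mediumc+\sum_i\mathrm{opt}_i$, where $\mathrm{opt}_i\in\{c+2\smallc,\;c+2\sqrt{\smallc^2+4}\}$ is the gadget optimum of Lemma~\ref{lem:gadget} for $C_i$ (the same value under $\sigma$ and under $\sigma'$ on every clause for which $\sigma$ is $\alpha,\beta$-valid). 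Combined with the bound above, this reduces the lemma to proving $\ell_S-W\ge\sum_i(\mathrm{opt}_i-\|T^\sigma_i\|)$. Call $c_i=x_\alpha\vee x_\beta$ \emph{honest} if $\sigma$ is $\alpha,\beta$-valid: for honest $c_i$ the ``furthermore'' part of Lemma~\ref{lem:gadget} gives $\|T^\sigma_i\|\ge\mathrm{opt}_i$ (charging, exactly as in the proof of Lemma~\ref{lem:solutionmap}, the $\ge\mediumc$ overhead of each extra departure/return of the strip of $C_i$ to $\|T^\sigma_i\|$ when $T^\sigma_i$ is disconnected, which is affordable since $\mediumc\gg\smallc$), so honest clauses contribute nonpositively. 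For a cheated clause, all points of $C_i$ lie in a vertical strip of horizontal width $27\smallc$, whence $\|T^\sigma_i\|\ge27\smallc$ (same charging when disconnected), while $\mathrm{opt}_i<c+2\sqrt{\smallc^2+4}<30\smallc$; thus $\mathrm{opt}_i-\|T^\sigma_i\|<3\smallc$ per cheated clause, and it remains to show $\ell_S-W\ge 3\smallc\cdot|\{i:c_i\text{ cheated}\}|$.

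\emph{The crux.} I would prove the stronger bound $\ell_S-W\ge\largec\cdot|\{i:c_i\text{ cheated}\}|$, which suffices since $\largec>3\smallc$ for the chosen constants. Because $T^\sigma[S]$ visits each colour's unique $S$-point in the order $\prec_\sigma$ along a line, $\ell_S=\sum_g\mathrm{width}(g)\,\mathrm{cross}(g)$ over the $2n$ gaps $g$ between consecutive point-positions of $S$, with $\mathrm{cross}(g)\ge1$; every gap width is a multiple of $\largec$ (the two gaps flanking $\{t_i,f_i\}$ have width $\largec\,n_i$), so $\ell_S-W=\sum_g\mathrm{width}(g)(\mathrm{cross}(g)-1)\ge\largec\sum_g(\mathrm{cross}(g)-1)$. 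The heart of the proof is a global charging: for each cheated $c_i=x_\alpha\vee x_\beta$, equation~\ref{eq:valid} fails, so some colour of a ``later'' chunk of the partition $\seta_{\alpha,\beta}\mid\{T_\alpha,F_\alpha\}\mid\setb_{\alpha,\beta}\mid\{T_\beta,F_\beta\}\mid\setc_{\alpha,\beta}$ $\prec_\sigma$-precedes some colour of an ``earlier'' chunk; such an inversion forces one of the gaps $g$ of $S$ separating these chunks to have $\mathrm{cross}(g)\ge2$, contributing at least its width $\ge\largec$ to $\ell_S-W$, and the frequency-weighted spacing of $S$ (the factors $\largec\,n_i$, with $\largec$ chosen so that $2\largec\,n_i$ comfortably dominates $n_i$ times the per-clause gadget saving $3\smallc$) is engineered so that, counted with multiplicity over all cheated clauses, these excess crossings total at least $|\{i:c_i\text{ cheated}\}|$. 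The main obstacle is precisely this accounting: a single misplaced colour block can break $\alpha,\beta$-validity for many clauses simultaneously — for instance moving $\{T_\gamma,F_\gamma\}$ to the front cheats every clause containing a variable of index $\le\gamma$ — so the argument must show that the same misplacement forces proportionally many re-crossings of the weighted $S$-gaps, which is exactly where the value of $\largec$ does its work. Given $\ell_S-W\ge\largec\cdot|\{i:c_i\text{ cheated}\}|$, the earlier reductions yield $\|T^\sigma\|\ge5W+(m-1)\mediumc+\sum_i\mathrm{opt}_i=\|T^{\sigma'}\|$, and $\sigma'$ was computed in polynomial time, completing the plan.
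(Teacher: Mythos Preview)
Your strategy is the paper's strategy: take $\sigma'$ to be the valid permutation agreeing with $\sigma$ on every variable, take $T^{\sigma'}$ to be the canonical tour of Lemma~\ref{lem:opttour}, and show that whatever $T^\sigma$ might save inside clause gadgets where $\sigma$ fails $\alpha,\beta$-validity is repaid by a longer traversal of $S$. Your decomposition $\|T^\sigma\|\ge \ell_S+4W+(m-1)\mediumc+\sum_i\|T^\sigma_i\|$, your handling of honest clauses via the ``furthermore'' clause of Lemma~\ref{lem:gadget}, and your per-cheated-clause saving bound $<3\smallc$ (tighter than the paper's $<30\smallc$, which uses only $\|T^\sigma_i\|>0$) are all fine.

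The genuine gap is exactly where you flag it. Your crux paragraph asserts that each cheated clause forces some chunk-separating gap to have $\mathrm{cross}(g)\ge 2$, contributing $\ge\largec$; but this by itself gives only $\ell_S-W\ge\largec$, not $\largec$ times the number of cheated clauses, because many cheated clauses can blame the \emph{same} extra crossing. The sentence ``the frequency-weighted spacing \dots\ is engineered so that, counted with multiplicity, these excess crossings total at least $|\{\text{cheated}\}|$'' is the statement to be proved, not a proof. The paper closes this by switching to a \emph{per-variable} accounting rather than per-clause. It slices $T^\sigma[S]$ at the lines $x=W-2\largec N_{i-1}$, writes $C(i)$ for the length inside variable~$i$'s slab, and sets $\hat I=\{i:C(i)\ge 4\largec n_i\}$. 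Two claims finish the argument: (i) if $\sigma$ is not $\alpha,\beta$-valid then $\alpha\in\hat I$ or $\beta\in\hat I$ --- so every cheated clause is counted at least once in $\sum_{i\in\hat I}n_i$, giving $|\{\text{cheated}\}|\le\sum_{i\in\hat I}n_i$; and (ii) each $i\in\hat I$ contributes at least $2\largec n_i$ to $\ell_S-W$. Together, $\ell_S-W\ge 2\largec\sum_{i\in\hat I}n_i\ge 2\largec\cdot|\{\text{cheated}\}|$, and since $\largec\gg\smallc$ this dominates the total gadget savings. Your gap-crossing formula $\ell_S-W=\sum_g\mathrm{width}(g)(\mathrm{cross}(g)-1)$ is the right scaffold for the same argument, but you still need claim~(i): that a failure of $\alpha,\beta$-validity forces backtracking specifically through one of the width-$\largec n_\alpha$ or width-$\largec n_\beta$ gaps, so that the single resulting penalty $\ge 2\largec n_\alpha$ already pays for all $\le n_\alpha$ cheated clauses containing $x_\alpha$.
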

\begin{proof}
If $\sigma$ is valid, then setting $\sigma' = \sigma$ satisfies the lemma.
Otherwise, consider the valid ordering $\sigma'$ that agrees with $\sigma$ on all variables, and the $\sigma'$-tour $T^*$ given by Lemma \ref{lem:opttour}.
We are done if $\|T^*\| \leq \|T^\sigma\|$; otherwise $\|T^\sigma\| < \|T^*\|$. 
Then, by Observation \ref{obs:tourlb}, $\sum_{i=1}^m \|T^\sigma_i\| < \|T^*\| - 5W - (m-1)b$.
We can similarly decompose $T^*$ to get $\|T^*\| = 5W + (m-1)b + \sum_{i=1}^m \|T^*_i\|$, where each $T^*_i$ is precisely the path visiting $C_i$ given by Lemma \ref{lem:gadget}.
Then $\sum_{i=1}^m \|T^\sigma_i\| <  \sum_{i=1}^m \|T^*_i\|$, and thus $\|T^\sigma_i\| < \|T^*_i\|$ for some $i$.
Therefore, let $\hat{T}^\sigma$ be the nonempty set of $T^\sigma_i$ for which $\|T^\sigma_i\| < \|T^*_i\|$.
We now reason that for each $T^\sigma_i \in \hat{T}^\sigma$, $T^\sigma$ incurs a penalty of $l > \|T^*_i\| - \|T^\sigma_i\|$ in $T^\sigma[S]$. 
In other words, if $\sum_{i=1}^m \|T^*_i\| - \sum_{i=1}^m \|T^\sigma_i\| \leq t$ for some $t$, then $\|T^\sigma[S]\| - \|T^*[S]\| \geq t$. 
This contradicts the earlier claim that $\|T^\sigma\| < \|T^*\|$, and will be sufficient to prove the lemma for $T^{\sigma'} = T^*$.

Consider some $T^\sigma_i$ corresponding to a clause $C_i$ on variables $x_{\alpha}, x_{\beta}$.
If $T^\sigma_i \in \hat{T}^\sigma$, then $\sigma$ is not $\alpha,\beta$-valid by Lemma \ref{lem:gadget}.
We now decompose $T^\sigma[S]$ by subdividing the tour at crossings of the lines $x = W - 2\largec N_{i-1}$ for $i \in [1, n+1]$. 
Let $\cost{i}$ denote the length of the tour between lines $x = W - 2\largec N_{i-1}$ and $x = W - 2\largec N_{i}$, i.e. $\cost{i}$ is the length of $T^\sigma[S]$  between the points of color $R_i$ and $R_{i+1}$.
Observe that $\cost{\alpha} = 2ln_\alpha$ and $\cost{\beta} = 2ln_\beta$ only if $\sigma$ is $\alpha,\beta$-valid. 
Otherwise, for either $\alpha$ or $\beta$, say without loss of generality $\alpha$, $\cost{\alpha} \geq 4ln_\alpha$.
Now, let $\hat{I} = \{i: \cost{i} \geq 4ln_i\}$.
Clearly, $\hat{I}$ is nonempty; else $\sigma$ would be valid.
Then $\|T^\sigma[S]\| - \|T^*[S]\| \geq 2l\sum_{i \in \hat{I}} n_i$.

Furthermore, $|\hat{T}^\sigma| \leq \sum_{i \in \hat{I}} n_i$, since $\sum_{i \in \hat{I}} n_i$ counts each clause $C_i$ on variables $x_\alpha, x_\beta$ for which $\sigma$ is not $\alpha,\beta$-valid at least once, and for all $i$, $T^\sigma_i \in \hat{T}^\sigma$ for clause $C_i$ on variables $x_\alpha, x_\beta$ implies $\sigma$ is not $\alpha,\beta$-valid.
We can trivially bound $\|T^\sigma_i\| > 0$, and since $\|T^*_i\| < 29\smallc + O(1)$, we get $\|T^*_i\| - \|T^\sigma_i\| < 30\smallc$ for sufficiently large choice of constant $a$. 
Then, $\sum_{i=1}^m \|T^*_i\| - \sum_{i=1}^m \|T^\sigma_i\| \leq 30\smallc|\hat{T}^\sigma| \leq 30\smallc\sum_{i \in \hat{I}} n_i$.
Thus, for $t = 2l\sum_{i \in \hat{I}} n_i$, it holds that 
$\|T^\sigma[S]\| - \|T^*[S]\| \geq t$ and $\sum_{i=1}^m \|T^*_i\| - \sum_{i=1}^m \|T^\sigma_i\| \leq t$, as long as $l \geq 30\smallc$. This completes the proof for $T^{\sigma'} = T^*$.
\end{proof}

We are ready to prove theorem \ref{thm:hardness}, which we restate here.
\hardness*
\begin{proof}
Assuming a PTAS for Euclidean PCTSP, we can derive a PTAS for Max 2-SAT, which contradicts its hardness \cite{papadimitriou1991}.
Given an instance to Max 2-SAT $\Pi$ with optimal solution value $k$ and target error $\epsilon$, we first construct the instance $P_\Pi$ of Figure \ref{fig:reduction} in polynomial time.
Let $f(k)$ be the length of the tour in Lemma \ref{lem:opttour}, namely $f(k) = 5W + (m-1)\mediumc + k(c + 2a) + (m-k)(c + 2\sqrt{\smallc^2 + 4})$.
(Recall that $W = 27ma + (m-1)b$, and $c$ is a constant in terms of $a$ and $b$.) 
For an appropriate $\epsilon'$, we apply the PTAS for EPCTSP to recover a $\sigma$-tour of $P_\Pi$ that has length at most $(1 + \epsilon')f(k)$, since OPT of $P_\Pi$ is at most $f(k)$ by Lemma \ref{lem:opttour}.
We can assume $\sigma$ is valid, else we can compute a valid $\sigma'$ that admits a $\sigma'$-tour with equal or better cost in polynomial time by Lemma \ref{lem:invalid}.
In the remainder of the proof, we show that $\sigma$ satisfies at least $(1 - \epsilon)k$ clauses of $\Pi$ for a sufficiently small choice of constant $\epsilon' = f(\epsilon, \smallc, \mediumc)$.

In particular, we have a $\sigma$-tour $T$ with $||T|| < (1 + \epsilon')f(k)$.
Now, we need to ensure that $||T|| < f((1 - \epsilon)k)$ to claim $\sigma$ satisfies $(1 - \epsilon)k$ clauses by Lemma \ref{lem:solutionmap}. 
For simplicity,
\begin{itemize}
    \item let constant $c_a = 2(\sqrt{a^2+4}-a) > 0$, and
    \item let constant $c_{a,b} = 5\cdot27a + 6b + c$,
\end{itemize}
such that $f(k) = c_{a,b}m - 6b - c_a k$.
Then we just need to choose $\epsilon'$ to satisfy
\[(1 + \epsilon')(c_{a,b}m - 6b - c_a k) \leq c_{a,b}m - 6b - c_a (1 - \epsilon) k.\]
We can rewrite this as
\[\epsilon' \leq \frac{\epsilon c_a k}{c_{a,b}m - 6b - c_a k}.\]
Finally, since it is trivial to satisfy half of the clauses in any 2-SAT instance, we can substitute $k \geq m/2$, and choose
\[\epsilon' = \frac{\epsilon c_a}{2c_{a,b} - c_a},\]
which is only a function of $\epsilon, a, b$ as claimed.
\end{proof}
\section{Conclusion}
In this paper, we introduced the Polychromatic TSP and studied its metric and Euclidean variants.
In the metric case, we gave a constant factor approximation that remains polynomial time for any number of colors.
We complemented this algorithmic result with the nonexistence of a PTAS, even for points in the plane as long as the number of colors is unbounded.
An interesting open question remains: does Euclidean PCTSP admit a PTAS for a constant number of colors?

\newpage

\bibliographystyle{plain}
\bibliography{citation}

\begin{thebibliography}{1}

\bibitem{anily1992}
Shoshana Anily and Refael Hassin.
\newblock The swapping problem.
\newblock {\em Networks}, 22(4):419--433, 1992.

\bibitem{arora1998}
S.~Arora.
\newblock Polynomial time approximation schemes for euclidean traveling salesman and other geometric problems.
\newblock {\em J. ACM}, 45(5):753–782, September 1998.

\bibitem{baligacs2024}
J.~Balig\'{a}cs, Y.~Disser, A.~E. Feldmann, and A.~Zych-Pawlewicz.
\newblock A $(5/3+\epsilon)$-approximation for tricolored non-crossing euclidean tsp.
\newblock In {\em 32nd Annual European Symposium on Algorithms (ESA 2024)}, volume 308, pages 15:1--15:15, 2024.

\bibitem{chalasani1996}
P.~Chalasani, R.~Motwani, and A.~Rao.
\newblock Algorithms for robot grasp and delivery.
\newblock In {\em 2nd International Workshop on Algorithmic Foundations of Robotics}, 1996.

\bibitem{dross2023}
F.~Dross, K.~Fleszar, K.~W{\k{e}}grzycki, and A.~Zych-Pawlewicz.
\newblock Gap-eth-tight approximation schemes for red-green-blue separation and bicolored noncrossing euclidean travelling salesman tours.
\newblock In {\em Proceedings of the 2023 Annual ACM-SIAM Symposium on Discrete Algorithms (SODA)}, pages 1433--1463, 2023.

\bibitem{karlin2021}
A.~R. Karlin, N.~Klein, and S.~O. Gharan.
\newblock A (slightly) improved approximation algorithm for metric tsp.
\newblock In {\em Proceedings of the 53rd Annual ACM SIGACT Symposium on Theory of Computing}, STOC 2021, page 32–45. Association for Computing Machinery, 2021.

\bibitem{papadimitriou1991}
C.~H. Papadimitriou and M.~Yannakakis.
\newblock Optimization, approximation, and complexity classes.
\newblock {\em Journal of Computer and System Sciences}, 43(3):425--440, 1991.

\bibitem{papadimitriou1993}
C.~H. Papadimitriou and M.~Yannakakis.
\newblock The traveling salesman problem with distances one and two.
\newblock {\em Mathematics of Operations Research}, 18(1):1--11, 1993.

\bibitem{srivastav2001}
A.~Srivastav, H.~Schroeter, and C.~Michel.
\newblock Approximation {Algorithms} for {Pick}-and-{Place} {Robots}.
\newblock {\em Annals of Operations Research}, 107(1):321--338, October 2001.

\end{thebibliography}

\end{document}